\theoremstyle{plain}
\newtheorem{thm}{\protect\theoremname}[section]
  \theoremstyle{definition}
  \newtheorem{defn}[thm]{\protect\definitionname}
\newenvironment{lyxlist}[1]
{\begin{list}{}
{\settowidth{\labelwidth}{#1}
 \setlength{\leftmargin}{\labelwidth}
 \addtolength{\leftmargin}{\labelsep}
 }}
{\end{list}}
  \theoremstyle{definition}
  \newtheorem{example}[thm]{\protect\examplename}
  \theoremstyle{plain}
  \newtheorem{lem}[thm]{\protect\lemmaname}
  \theoremstyle{remark}
  \newtheorem{rem}[thm]{\protect\remarkname}
  \theoremstyle{plain}
  \newtheorem{cor}[thm]{\protect\corollaryname}
  \theoremstyle{remark}
  \newtheorem*{rem*}{\protect\remarkname}
  \theoremstyle{plain}
  \newtheorem*{thm*}{\protect\theoremname}
  \providecommand{\corollaryname}{Corollary}
  \providecommand{\definitionname}{Definition}
  \providecommand{\examplename}{Example}
  \providecommand{\lemmaname}{Lemma}
  \providecommand{\remarkname}{Remark}
  \providecommand{\theoremname}{Theorem}
  \providecommand{\theoremname}{Theorem}
\global\long\def\ETH{\mathsf{ETH}}
\global\long\def\sharpETH{\mathsf{\#ETH}}
\global\long\def\rETH{\mathsf{rETH}}
\global\long\def\sharpP{\mathsf{\#P}}
\global\long\def\pmult#1{\mathbf{#1}}
\global\long\def\coeff#1{\mathsf{Coeff}(#1)}
\global\long\def\leqSERF{\leq_{\mathit{serf}}^{T}}
\global\long\def\eval#1{\mathsf{Eval}(#1)}
\global\long\def\A{\mathsf{A}}
\global\long\def\B{\mathsf{B}}
\global\long\def\evalfix#1#2{\mathsf{Eval}_{#1}(#2)}
\global\long\def\perm{\mathrm{perm}}
\global\long\def\usat{\mathrm{usat}}
\global\long\def\PM{\mathcal{PM}}
\global\long\def\O{O}
\global\long\def\M{\mathcal{M}}
\global\long\def\PerfMatch{\mathrm{PerfMatch}}
\date{}
\title{Block Interpolation: \\ A Framework for Tight Exponential-Time Counting Complexity}
\author{Radu Curticapean\thanks{Supported by ERC Starting Grant PARAMTIGHT (No. 280152). Part of this work was done at Saarland University and appeared in the author's PhD thesis \cite{DBLP:phd/dnb/Curticapean15}. Another part was done while the author was visiting the Simons Institute for the Theory of Computing.}}
\affil{Institute for Computer Science and Control, \\ Hungarian Academy of Sciences (MTA SZTAKI)}
\begin{document}

\maketitle

\begin{abstract}
We devise a framework for proving tight lower bounds under the counting exponential-time hypothesis $\sharpETH$ introduced by Dell~et~al.~(ACM Transactions on Algorithms, 2014). 
Our framework allows us to convert classical $\sharpP$-hardness results for counting problems into tight lower bounds under $\sharpETH$,
thus ruling out algorithms with running time $2^{o(n)}$ on graphs with $n$ vertices and $O(n)$ edges.
As exemplary applications of this framework, we obtain tight lower bounds under $\sharpETH$ for the evaluation of the zero-one permanent, the matching polynomial, and the Tutte polynomial on all non-easy points except for one line. This remaining line was settled very recently by Brand~et~al.~(IPEC 2016).
\end{abstract}

\section{Introduction}

Counting complexity is a classical subfield of complexity theory,
launched by Valiant's seminal paper \cite{V79} that introduced the
class $\sharpP$ and proved $\sharpP$-hardness of the zero-one permanent,
a problem equivalent to counting
perfect matchings in a bipartite graph. This initial breakthrough
spawned an ongoing research program that systematically studies the
complexity of computational counting problems, and many results in
this area can be organized as dichotomy results. Such results show
that, among problems that can be expressed in certain rich frameworks,
each problem is either polynomial-time solvable or $\sharpP$-hard.
Moreover, these results often give criteria for deciding which side
of the dichotomy a given problem occupies. For instance, a full dichotomy
was shown for the problems of counting solutions to constraint-satisfaction
problems \cite{DBLP:conf/icalp/Bulatov08,DBLP:conf/stoc/CaiC12},
and similar results are known for large subclasses of so-called Holant problems \cite{DBLP:conf/soda/CaiLX11,DBLP:journals/siamcomp/CaiGW16},
and for the evaluation of graph polynomials such as the Tutte polynomial
\cite{JVW90} and the cover polynomial \cite{BD07,DBLP:conf/mfcs/BlaserC11}.

Over the course of the counting complexity program, 
it became clear that most interesting counting problems are $\sharpP$-hard, 
and that the class of polynomial-time solvable problems is rather limited,
nevertheless containing some surprising examples, 
such as counting perfect matchings in planar graphs, 
counting spanning trees,
and problems amenable to holographic algorithms \cite{Valiant2008}. 
To attack the large body of hard problems, 
several relaxations were studied, 
such as approximate counting \cite{DBLP:journals/jacm/JerrumSV04,DBLP:journals/siamcomp/GoldbergJ14,GJ08Tutte},
counting modulo fixed numbers \cite{Hertrampf1990,Valiant2006},
and counting on restricted graph classes, such as planar and/or $3$-regular graphs \cite{DBLP:journals/siamcomp/Vadhan01,DBLP:conf/tamc/CaiLX09}.

In this paper, we follow an avenue of relaxations recently introduced
by Dell et al.~\cite{DBLP:journals/talg/DellHMTW14} and consider
the possibility of sub-exponential exact algorithms for counting problems.
More precisely, we rule out such algorithms for various counting problems under popular complexity-theoretic
assumptions. For instance, we can clearly count perfect matchings
on $m$-edge graphs in time $2^{O(m)}$ by brute-force, but is there
a chance of obtaining a running time of $2^{o(m)}$? An unconditional
negative answer would imply the separation of $\mathsf{FP}$ and
$\sharpP$, so our results need to rely upon additional hardness assumptions:
We build upon the \emph{exponential-time hypothesis}
$\sharpETH$, introduced in \cite{DBLP:journals/talg/DellHMTW14},
which we may consider for now as the hypothesis that the satisfying assignments to $3$-CNF formulas
$\varphi$ on $n$ variables cannot be counted
in time $2^{o(n)}$. This hypothesis is trivially implied by the better-known
and widely-believed decision version $\ETH$, introduced in \cite{IPZ01,Impagliazzo.Paturi2001},
which assumes the same lower bound for \emph{deciding }the satisfiability
of $\varphi$.

Dell et al.~\cite{DBLP:journals/talg/DellHMTW14} were able to prove almost-tight lower bounds under $\sharpETH$ for a variety of counting problems:
For instance, they could rule out algorithms with running time $2^{o(n/\log n)}$
for the zero-one permanent on graphs with $n$ vertices and $\O(n)$
edges. Similar lower bounds were shown for counting vertex covers,
and for most points of the Tutte polynomial. 

\subsection{Hardness via polynomial interpolation}

The lower bounds in \cite{DBLP:journals/talg/DellHMTW14} are obtained via polynomial interpolation, 
one of the most prominent techniques for non-parsimonious reductions between counting problems
\cite{L86,JVW90,DBLP:journals/siamcomp/Vadhan01,DBLP:conf/tamc/CaiLX09,DBLP:conf/iwpec/HusfeldtT10,DBLP:conf/iwpec/Hoffmann10,DBLP:journals/talg/DellHMTW14}.
To illustrate this technique, and for the purposes of further exposition,
let us reduce counting \emph{perfect matchings} to counting \emph{matchings}
(that are not necessarily perfect), using a standard argument similar
to \cite{DBLP:journals/siamcomp/Vadhan01}. In the following, let
$G$ be a graph with $n$ vertices. We wish to obtain the number of perfect matchings in $G$ by querying an oracle for counting matchings in arbitrary graphs.
\begin{description}
\item [{Step~1~\textendash ~Set up interpolation:}] For $k\in\mathbb{N}$,
let $m_{k}$ denote the number of matchings with exactly $k$ unmatched
vertices in $G$. In particular, $m_{0}$ is equal to the number of
perfect matchings in $G$. For an indeterminate $x$, define a polynomial $\mu$ via
\begin{equation}
\mu(x)=\sum_{k=0}^{n}m_{k}\cdot x^{k}\label{eq: match-poly-firstdef}
\end{equation}
and observe that its degree is $n$. Hence, we could use Lagrange interpolation
to recover all its coefficients if we were given the evaluations of $\mu$ at $n+1$ distinct input points.
In particular, this would give us the constant coefficient $m_{0}$, which counts the number of perfect matchings in $G$.

\item [{Step~2~\textendash ~Evaluate the polynomial with gadgets:}] We can evaluate $\mu(t)$
at points $t\in\mathbb{N}\setminus \{0\}$ by a reduction to counting
matchings: For $t\in\mathbb{N}$ with $t\geq1$, define a graph $G_{t}$
from $G$ by adding, for each vertex $v\in V(G)$, a \emph{gadget} 
that consists of an independent set of $t-1$ fresh vertices
together with edges from all of these vertices to $v$.
Then it can be checked that $\mu(t)$ is equal to the number of matchings in $G_{t}$: Each
matching in $G$ with exactly $k$ unmatched vertices can be extended
to $t^{k}$ matchings in $G_{t}$ by including up to one gadget edge
at each unmatched vertex.
\end{description}

In summary, by evaluating the polynomial $\mu(t)$ for all $t\in\{1,\ldots,n+1\}$ via
gadgets and an oracle for counting matchings in $G_{t}$, we can use
Lagrange interpolation to obtain $m_{0}$. 
This gives a polynomial-time Turing reduction from counting perfect matchings to counting matchings,
transferring the $\sharpP$-hardness of the former problem to the latter.

Furthermore, the above argument can also be used to derive a lower bound for counting
matchings, which is however far from being tight: If the running time
for counting perfect matchings on $n$-vertex graphs has a lower bound
of $2^{\Omega(n)}$, then only a $2^{\Omega(\sqrt{n})}$ lower bound
for counting matchings follows from the above argument, since the reduction incurs a quadratic blowup. 
This is because $G_{n+1}$ has a gadget of size $O(n)$ at each vertex, and thus $O(n^{2})$
vertices in total.

Following the same outline as above, but using more sophisticated
gadgets with $O(\log^{c}n)$ vertices, similar reductions for various problems were obtained in \cite{DBLP:conf/iwpec/Hoffmann10,DBLP:conf/iwpec/HusfeldtT10,DBLP:journals/talg/DellHMTW14},
implying $2^{\Omega(n/\log^{c}n)}$ lower bounds for these problems,
which are however still not tight. In particular, these reductions
share the somewhat unsatisfying commonality that they ``leak'' hardness: 
Tight lower bounds
for the source problems of computing specific hard coefficients in a polynomial became less
tight over the course of the reduction.

\subsection{The limits of interpolation}

Let us say that a reduction is \emph{gadget-interpolation-based} if it proceeds
along the two steps sketched above: First encode a hard problem into
the coefficients of a polynomial $p$, then find gadgets that can
be ``locally'' placed at vertices or edges so as to evaluate
$p(\xi)$ at sufficiently many points $\xi$.
Finally use Lagrange interpolation to recover $p$ from these
evaluations. As remarked before, this is a well-trodden route for
$\sharpP$-hardness proofs. However, when taking this
route to prove lower bounds under $\sharpETH$, we run into the following
obstacles:
\begin{enumerate}
\item Gadget-interpolation-based
reductions typically yield polynomials $p$ of degree $n=|V(G)|$, hence
require $n+1$ evaluations of $p$ at \emph{distinct} points, and
thus in turn require $n+1$ \emph{distinct} gadgets to be placed at
vertices of $G$. But since there are only finitely many simple graphs
on $O(1)$ vertices, the size of such gadgets must necessarily grow
as some unbounded function $\alpha(n)$. Thus, any gadget-interpolation-based
reduction can only yield $2^{\Omega(n/\alpha(n))}$ time lower bounds
for some unbounded function $\alpha\in\omega(1)$, but such bounds
are typically not tight.
\item Additionally, such reductions issue only polynomially
many queries to the target problem. This is required for the setting
of $\sharpP$-hardness, but it is nonessential in exponential-time
complexity: To obtain a lower bound of $2^{\Omega(n)}$, we might
as well use a reduction that requires $2^{o(n)}$ time and issues
$2^{o(n)}$ queries to the target problem, provided that the graphs
used in the oracle queries have only $\O(n)$ vertices. However, it
is a priori not clear how to exploit this additional freedom.
\end{enumerate}
These two issues are immanent to every known lower bound under $\sharpETH$
and have ruled out tight lower bounds of the form $2^{\Omega(n)}$
so far.

\subsection{The block-interpolation framework}

In this paper, we circumvent the barriers mentioned above by introducing
a framework that allows to apply the full power of sub-exponential
reductions to counting problems. 
To this end, we use a simple trick based on \emph{multivariate} polynomial
interpolation, which we dub \emph{block-interpolation}: In this setting,
we do not use a univariate polynomial $p$ of degree $n$ in the reduction, but rather
a \emph{multivariate} polynomial $\pmult p$ that we can easily obtain
from $p$. This polynomial $\pmult p$ also has total degree $n$,
but it has only maximum degree $\nicefrac{1}{\epsilon}$ in each individual
indeterminate, for any $\epsilon > 0$ we choose. By making sure that $\epsilon$ is small enough, we can interpolate
the polynomial $\pmult p$ from $2^{o(n)}$ evaluations.

While this increases the number of evaluations significantly,
we obtain the following crucial benefit: Each evaluation $\pmult p(\xi)$ required
for the interpolation can be performed at a point $\xi$ whose individual
entries are contained in a fixed set of size $\nicefrac{1}{\epsilon}+1$.
This will enable us to compute $\pmult p(\xi)$ by attaching only
$\nicefrac{1}{\epsilon}+1$ \emph{distinct} gadgets to $G$. The catch
here is that different vertices may obtain different gadgets, which
was not feasible in the univariate setting. 

This way, we overcome
the two above-mentioned limitations of gadget-interpolation-based
reductions while simultaneously staying as close as possible to the
outline of such reductions.
Consequently, we can phrase our technique as a general framework
that can be used to convert a large body of existing $\sharpP$-hardness
proofs into tight lower bounds under $\sharpETH$. Curiously enough,
the growth of the gadgets used in the original proofs is \emph{irrelevant}
to our framework, as only a constant number of gadgets will be used throughout the reduction.
This allows us to use luxuriously large gadgets and it shortcuts the
need for the involved and resourceful gadget constructions used, e.g.,
for simulating weights in the Tutte polynomial \cite{DBLP:journals/talg/DellHMTW14,DBLP:conf/iwpec/HusfeldtT10}
or in the independent set polynomial \cite{DBLP:conf/iwpec/Hoffmann10}.
More importantly, our bounds are tight.

\subsection{Applications of the framework}

To showcase our framework, we show that $\sharpETH$ rules out algorithms
with running time $2^{o(n)}$ for several classical problems on unweighted
simple graphs $G$ with $n$ vertices and $O(n)$ edges, which we
call sparse graphs. All of the considered problems admit trivial $2^{O(n)}$ time algorithms
on such graphs. It should be noted that it is crucial to obtain hardness
results for sparse graphs: Many reductions between counting
problems proceed by placing gadgets at \emph{edges}, and this would
map non-sparse graphs with $\omega(n)$ edges to target graphs on
$\omega(n)$ vertices, thus failing to provide $2^{\Omega(n)}$ time lower bounds
for the target problem.

More precisely, we show the following slightly stronger statements: Assuming
$\sharpETH$, each of the considered problems admits constants $\epsilon,C>0$
such that the problem cannot be solved in time
$O(2^{\epsilon n})$ on $n$-vertex graphs, even when these graphs have at
most $Cn$ edges. This clearly implies the claimed statements, and
we will elaborate on this in Section~\ref{subsec:Exponential-time-complexity}.
\begin{thm}
\label{thm: Main PM}Assuming $\sharpETH$, counting
perfect matchings admits no $2^{o(n)}$ time algorithm, even for graphs
that are bipartite, sparse, and unweighted.
\end{thm}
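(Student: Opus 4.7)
First, I would invoke a standard linear-size $\sharpP$-hardness reduction from counting satisfying assignments of a $3$-CNF formula $\varphi$ with $n$ variables and $\O(n)$ clauses to counting perfect matchings in a bipartite graph $H$ on $\O(n)$ vertices and edges, where each edge carries an integer weight drawn from a fixed finite set $W\subset\mathbb{Z}$ (e.g.\ a linear-size Valiant-style construction). Because the reduction is linear, $\sharpETH$ rules out any $2^{o(n)}$ algorithm for this weighted problem. The overall plan is then to apply the block-interpolation framework to eliminate the weights at the cost of $2^{o(n)}$ oracle queries to the unweighted sparse bipartite $\PerfMatch$ problem on $\O(n)$ vertices.

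Specifically, introduce a fresh indeterminate $x_{e}$ for every edge $e$ and form
\[
  p(\vec{x}) \;=\; \sum_{M\in\PM(H)}\prod_{e\in M}x_{e},
\]
so that evaluating $p$ at the weight vector $\vec{w}$ returns the weighted perfect-matching count. Fix an arbitrary $\epsilon>0$, partition $E(H)$ into $\epsilon n$ blocks of size $1/\epsilon$, and identify the variables within each block with a single new indeterminate $y_{i}$. The resulting polynomial $\pmult{q}(y_{1},\dots,y_{\epsilon n})$ has individual degree $1/\epsilon$ in each $y_{i}$ and, by multivariate Lagrange interpolation, is completely determined by its values on the grid $\{0,1,\dots,1/\epsilon\}^{\epsilon n}$, a total of $(1/\epsilon+1)^{\epsilon n}=2^{o(n)}$ evaluation points for suitably small $\epsilon$. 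The next task is to exhibit, for each $\xi\in\{0,1,\dots,1/\epsilon\}$, a constant-size unweighted bipartite gadget $\Gamma_{\xi}$ which, when substituted for an edge of $H$, simulates the weight $\xi$ on that edge in the perfect-matching count up to a uniform normalising factor. Since $\xi$ ranges over a fixed finite set depending only on $\epsilon$, only $\O(1)$ distinct gadgets are needed, their individual sizes are constant in $n$, and the attached graph stays bipartite and sparse on $\O(n)$ vertices.

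For each of the $2^{o(n)}$ grid points $\vec{\xi}$, replacing every edge by the appropriate gadget $\Gamma_{\xi_{i}}$ yields an unweighted sparse bipartite graph $H_{\vec{\xi}}$ whose perfect-matching count equals $\pmult{q}(\vec{\xi})$ up to the known normaliser. A hypothetical $2^{o(n)}$ algorithm for unweighted sparse bipartite $\PerfMatch$ would then resolve each evaluation in time $2^{o(n)}$; interpolating $\pmult{q}$ and evaluating at the true weight vector $\vec{w}$ would count the satisfying assignments of $\varphi$ in total time $2^{o(n)}$, contradicting $\sharpETH$.

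The main obstacle is the gadget construction: one must explicitly realize each target integer weight $\xi\in\{0,1,\dots,1/\epsilon\}$ by a finite unweighted bipartite fragment that composes correctly under perfect-matching counting and preserves sparsity upon insertion. Because only $\O(1)$ such gadgets are needed and their sizes are independent of $n$, this is a finite existence check rather than the parameter-dependent, size-optimised gadget engineering required by the univariate reductions of Dell et al.\ --- which is precisely the conceptual saving that block-interpolation buys us, and what upgrades the lower bound from $2^{o(n/\log n)}$ to the tight $2^{o(n)}$.
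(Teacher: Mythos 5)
Your overall strategy -- start from a linear-size reduction to a sparsely weighted bipartite permanent, then apply block interpolation to replace the small fixed weight set by constant-size unweighted gadgets -- is exactly the route the paper takes. Your ``linear-size Valiant-style construction'' is precisely the Dell et al.~theorem (their Thm.~1.3) that the paper cites for the $\{-1,1\}$-weighted permanent on sparse bipartite graphs, and the gadget you leave as a ``finite existence check'' is the standard one the paper makes explicit in Example~\ref{exa: gadgets-perfmatch}: $k$ parallel $u$--$v$ edges each subdivided twice, which contributes a factor of $k$ when the virtual edge is matched and $1$ otherwise.

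There is one genuine gap in the write-up, however: you ``partition $E(H)$ into $\epsilon n$ blocks of size $1/\epsilon$'' with no constraint, identify all $x_e$ within a block with a single $y_i$, interpolate $\pmult q$, and then propose to ``evaluate at the true weight vector $\vec w$.'' This last step does not type-check. Once variables are collapsed block-wise, $\pmult q$ only sees one value per block, so $\pmult q(\vec\xi)=\PerfMatch(H;\vec w)$ forces $\vec w$ to be constant on every block. If two edges in the same block carry different weights, there is no single $\xi_i$ reproducing the original weighted count, and the block-collapsed coefficients of $\pmult q$ are genuinely lossy (many multilinear monomials of $\PerfMatch(H)$ are merged). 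The fix is easy and implicit in your intent but must be stated: since $W$ is finite, partition \emph{each weight class} of $E(H)$ into blocks of size $1/\epsilon$, so each block is weight-homogeneous; then the block weight vector is well-defined and the evaluation goes through. (Weight-$1$ edges contribute a factor $1$ and need neither gadgets nor blocks.)

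The paper sidesteps this issue with a slightly slicker choice of source polynomial: instead of the full multivariate $\PerfMatch(H)$, it uses the \emph{univariate} $p(G;x)=\sum_{M\in\PM[G]} x^{|M\cap E_{-1}(G)|}$, places variables only on the weight-$(-1)$ edges, recovers the coefficients of $p$ from those of the blocked polynomial via Lemma~\ref{prop: coefficients}, and finally evaluates $p(G;-1)$. Because all variabled edges are already homogeneous, any equitable block partition works and no weight-consistency caveat is needed; the paper also packages the whole argument as the reusable Theorem~\ref{thm: Block interpolation}, so the same machinery applies verbatim to $\mu$, $I$, and $Z_q$. Your version is the ``direct'' route and is fine once you add the weight-consistent blocking; the paper's version buys cleaner invariance of the block partition and a ready-made framework for the other applications.
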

In \cite{DBLP:journals/talg/DellHMTW14}, only a lower bound of $2^{\Omega(n/\log n)}$ under $\sharpETH$
was shown for this problem. Tight lower bounds
of $2^{\Omega(n)}$ were obtained only (a) under the randomized version
$\rETH$ of $\ETH$, which implies $\ETH$ and thus in turn $\sharpETH$,
but no converse direction is known, or (b) under $\sharpETH$, but
by introducing negative edge weights. Such edge weights are generally
worrying, because it is a priori unclear how to remove them in reductions
to other problems.

By reduction from Theorem~\ref{thm: Main PM}, we then obtain a hardness
result for the matching polynomial, as defined in (\ref{eq: match-poly-firstdef}), and a similar graph polynomial, the independent set polynomial.
We will provide the precise definitions of the matching and independent set polynomials and
their associated evaluation problems in Section~\ref{subsec:Graph-polynomials}.

\begin{thm}
\label{thm: Main Match}Assuming $\sharpETH$, the problem of evaluating
the matching polynomial $\mu(G;\xi)$ admits no $2^{o(n)}$ time algorithm
at all fixed $\xi\in\mathbb{Q}$, even on graphs that are sparse and
unweighted. The same holds for the independent set polynomial $I(G;\xi)$
at all fixed $\xi\in\mathbb{Q}\setminus\{0\}$. 
\end{thm}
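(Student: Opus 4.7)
I would prove the matching polynomial case of Theorem~\ref{thm: Main Match} by a block-interpolation reduction from Theorem~\ref{thm: Main PM}. The case $\xi = 0$ is immediate, since $\mu(G;0) = m_{0}$ counts the perfect matchings of $G$. For fixed $\xi \in \mathbb{Q} \setminus \{0\}$ and a sparse graph $G$ on $n$ vertices (without loss of generality of bounded degree, via standard degree-reduction gadgets), I choose a small $\epsilon > 0$, partition $V(G)$ into $m = \epsilon n$ blocks $B_{1}, \ldots, B_{m}$ of size $1/\epsilon$ each, and for each $\vec{t} = (t_{1}, \ldots, t_{m}) \in \{0, 1, \ldots, 1/\epsilon\}^{m}$ form $G_{\vec{t}}$ by attaching $t_{j}$ pendant leaves to every vertex of $B_{j}$. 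For any fixed $\epsilon$, the graph $G_{\vec{t}}$ remains sparse with $\mathcal{O}(n)$ vertices and edges.

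A routine case analysis on each $v \in V(G)$ (matched or unmatched by the restriction of a matching of $G_{\vec{t}}$ to $E(G)$) yields the identity
\[
\mu(G_{\vec{t}};\xi) \;=\; \Bigl(\prod_{j=1}^{m}\xi^{\,t_{j}\,|B_{j}|}\Bigr)\cdot\nu\bigl(G;\,y(t_{1}), \ldots, y(t_{m})\bigr),
\]
where $\nu(G; \vec{y}) := \sum_{M} \prod_{j} y_{j}^{u_{j}(M)}$ is the block-multivariate matching polynomial ($u_{j}(M)$ the number of $B_{j}$-vertices left unmatched by $M$) and $y(t) = \xi + t/\xi$. Because $\xi \neq 0$, the map $t \mapsto y(t)$ is injective on $\{0, \ldots, 1/\epsilon\}$, and $\nu(G;\vec{y})$ has degree at most $|B_{j}| = 1/\epsilon$ in each individual $y_{j}$.

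Multivariate Lagrange interpolation on the grid $\{y(0), \ldots, y(1/\epsilon)\}^{m}$ therefore recovers every coefficient of $\nu$ from its $(1/\epsilon + 1)^{\epsilon n}$ grid evaluations; in particular the constant coefficient $\nu(G; \vec{0}) = m_{0}$ counts the perfect matchings of $G$. The total number of oracle calls to $\mu(\cdot;\xi)$ is $2^{\epsilon n \log_{2}(1/\epsilon + 1)}$, which tends to $2^{o(n)}$ as $\epsilon \to 0$, and each is made on a sparse graph of $\mathcal{O}(n)$ vertices. Hence a hypothetical $2^{o(n)}$-time algorithm for $\mu(\cdot;\xi)$ on sparse graphs would yield a $2^{o(n)}$-time algorithm for counting perfect matchings on sparse bipartite graphs, contradicting Theorem~\ref{thm: Main PM}.

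For $I(G;\xi)$ at $\xi \in \mathbb{Q} \setminus \{0\}$, I invoke the line-graph identity $I(L(G);\xi) = \sum_{M \text{ matching}} \xi^{|M|} =: M(G;\xi)$. Since $L(G)$ is sparse whenever $G$ is bounded-degree sparse, it suffices to prove hardness of the matching generating polynomial $M(\cdot;\xi)$ at every fixed rational $\xi \neq 0$ on bounded-degree sparse graphs. The identity $M(G;\xi) = \xi^{n/2}\mu(G;\xi^{-1/2})$ reduces this in turn to evaluating $\mu$ at the algebraic point $\xi^{-1/2}$, and the block-interpolation argument above applies verbatim over the constant-degree extension $\mathbb{Q}(\xi^{-1/2})$: arithmetic there is polynomial-time, and the substitution $y(t) = \xi^{-1/2} + t\,\xi^{1/2}$ still yields $1/\epsilon + 1$ distinct field elements. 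The main technical challenge throughout is exhibiting a single constant-size gadget family whose effective substitution is injective in $t$ for \emph{every} fixed nonzero $\xi$; the pendant gadget accomplishes this uniformly and shortcuts the elaborate logarithmic-size gadgets used in prior work.
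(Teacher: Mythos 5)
Your proof is correct and takes essentially the same approach as the paper's: the pendant-leaf (star) vertex-gadget is exactly the family $\mathcal{H}$ used in the paper's Lemma~\ref{lem: mu MatchSum weights}, coefficient hardness comes from Theorem~\ref{thm: Main PM}, and the independent-set case is handled by the identical line-graph identity $\mu(G;\xi)=\xi^n I(L(G);\xi^{-2})$ together with the observation (Remark~\ref{rem: match eth also complex} in the paper) that $\mu$ at $\xi^{-1/2}$ can be evaluated over $\mathbb{Q}$ when $|V(G)|$ is even, so the interpolation goes through verbatim. The only presentational difference is that you instantiate the block-interpolation machinery directly rather than invoking Theorem~\ref{thm: Block interpolation} as a black box, which makes the argument self-contained.

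A small detail worth flagging: your local substitution $y(t)=\xi+t/\xi$ is arithmetically cleaner than the weight $w_t=1+t/\xi^2$ stated in the paper's Lemma~\ref{lem: mu MatchSum weights}. Redoing the local case analysis, an unmatched vertex with $t$ pendant leaves contributes $\xi^{t+1}+t\,\xi^{t-1}=\xi^t\bigl(\xi+t/\xi\bigr)$, not $\xi^t\bigl(1+t/\xi^2\bigr)$, so your version matches the factor function $F(G,\xi)=\prod_t \xi^{t\cdot a_t}$ exactly; the paper's displayed weight appears to be off by a factor of $\xi$. Fortunately both sequences of weights are injective in $t$ for any fixed $\xi\neq 0$, so the framework's conclusion is unaffected, but your arithmetic is the one that makes the identity $\mu(T(G);\xi)=F(G,\xi)\cdot\vec{\mu}(G)$ literally true.
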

Both statements hold in particular at $\xi=1$, where these polynomials
simply count matchings, and independent sets, respectively. No lower
bounds for $\mu(G;\xi)$ are stated in the literature. In \cite{DBLP:conf/iwpec/Hoffmann10},
a lower bound of $2^{\Omega(n/\log^{3}n)}$ for $I(G;\xi)$ was shown
at general $\xi\in\mathbb{Q}\setminus\{0\}$, and a lower bound of $2^{\Omega(n)}$
was shown at $\xi=1$, but neither of these bounds apply to sparse graphs. 

Finally, we show lower bounds for the Tutte polynomial. Again, the
formal definition of this graph polynomial will be provided in Section~\ref{subsec:Graph-polynomials}.
\begin{thm}
\label{thm: Main Tutte}Assuming $\sharpETH$, the Tutte polynomial
$T(x,y)$ for fixed points $(x,y)\in\mathbb{Q}^{2}$ cannot be evaluated
in time $2^{o(n)}$ on sparse simple graphs if 

\begin{itemize}
\item $y\neq1$, and 
\item $(x,y)\notin\{(1,1),(-1,-1),(0,-1),(-1,0)\}$, and 
\item $(x-1)(y-1)\neq1$.
\end{itemize}

\end{thm}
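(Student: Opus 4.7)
The plan is to execute the classical Jaeger--Vertigan--Welsh \cite{JVW90} dichotomy argument for the Tutte polynomial, but everywhere the original proof invokes univariate Lagrange interpolation on edge-thickenings or edge-stretches, I would substitute the multivariate block-interpolation machinery developed in this paper. The root of the entire hardness chain will be the tight $2^{\Omega(n)}$ lower bound on counting perfect matchings in sparse bipartite graphs provided by Theorem~\ref{thm: Main PM}.

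For a target rational point $(x,y)$ admitted by the theorem, let $q=(x-1)(y-1)$ and consider the hyperbola $H_{q}=\{(x',y')\in\mathbb{Q}^{2}:(x'-1)(y'-1)=q\}$. First I would import from \cite{JVW90,DBLP:journals/talg/DellHMTW14} a ``base point'' $(x_{0},y_{0})\in H_{q}$ at which Tutte hardness has already been derived from a counting problem with a tight $\sharpETH$-lower bound: for integer $q\geq 2$ this is the Potts partition function, reachable from counting perfect matchings by constant-size gadgets; for other rational $q$, one of the explicit base points exhibited in the JVW proof. Since these base-point reductions use only constant-size gadgets, they fit directly into the block-interpolation framework and yield a tight $2^{\Omega(m)}$ lower bound for $T$ at $(x_{0},y_{0})$ on sparse graphs.

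Second I would transport this lower bound along $H_{q}$ to the target $(x,y)$ by block interpolation using edge thickenings. A $k$-thickening of an edge, i.e.\ its replacement by $k$ parallel copies, translates a Tutte evaluation of $G$ at a point $(x,\tilde{y}_{k})\in H_{q}$ (where $\tilde{y}_{k}$ is determined by $y$ and $k$) into an evaluation at $(x,y)$ of the thickened graph, up to a known normalization factor. The classical univariate argument assigns the \emph{same} $k\in\{1,\ldots,m+1\}$ to every edge and interpolates a polynomial of degree $\O(m)$, which destroys sparsity. Instead, I would assign an independent multiplicity $k_{e}\in\{0,1,\ldots,\lceil 1/\epsilon\rceil\}$ to each edge $e$, obtaining a multivariate polynomial $\vec{p}(k_{1},\ldots,k_{m})$ of individual degree $\lceil 1/\epsilon\rceil$ and total degree $\O(m)$. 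Every evaluation of $\vec{p}$ is a single oracle call to $T$ at $(x,y)$ on a graph obtained from $G$ by attaching one of $\lceil 1/\epsilon\rceil+1$ fixed constant-size gadgets to each edge, which preserves sparsity; the $2^{\O(m\log(1/\epsilon))}$ such calls needed to reconstruct the coefficient of $\vec{p}$ encoding $T(G;x_{0},y_{0})$ remain within a $2^{o(m)}$-time reduction provided $\epsilon\to 0$ slowly.

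The main obstacle is the list of excluded points. The four exceptional points and the hyperbola $H_{1}$ admit polynomial-time algorithms by \cite{JVW90} and therefore lie genuinely outside any lower-bound scheme. The exclusion $y\in\{0,1\}$ is a limitation of the thickening tool: the map $k\mapsto y^{k}$ is constant at $y=1$ and collapses all $k\geq 1$ to $0$ at $y=0$, so the multivariate Vandermonde system underlying block interpolation becomes singular on these two lines and no information about the base point is recoverable. Dual issues arise for edge stretching at $x\in\{0,1\}$; in principle a careful case split based on stretches could cover portions of the forbidden lines, but this extension is outside the scope of the theorem. At every remaining rational $(x,y)$, chaining the base-point hardness with edge-thickening-based block interpolation delivers the claimed $2^{\Omega(m)}$ lower bound on sparse graphs.
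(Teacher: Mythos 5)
Your proposal is in the right spirit: block interpolation applied to the Tutte polynomial via edge gadgets, exactly as the paper does. But you and the paper diverge on which Tutte transformation plays the role of the gadget, and this matters.

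The paper uses edge \emph{stretching} (replacing each edge by a path $P_k$, Lemma~\ref{lem: tutte-weights}), whereas you use edge \emph{thickening} (replacing each edge by $k$ parallel copies). These are dual operations, but they are not interchangeable here for two reasons. First, thickening produces multigraphs, while the theorem explicitly claims a lower bound on \emph{simple} graphs; the paper's paths are automatically simple. Making your thickenings simple would require subdividing (turning the gadgets into $\Theta$-graphs), at which point the realized weight is no longer $y^{k}$ and your analysis of the degenerate lines no longer applies as written. Second, the map $k\mapsto y^{k}$ also collapses at $y=-1$ (only two distinct values, alternating), so thickening-based block interpolation fails along the entire line $y=-1$, not just at the two exceptional points $(-1,-1)$ and $(0,-1)$; the theorem claims hardness at, e.g., $(2,-1)$, which your scheme does not reach. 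The paper's stretching avoids this particular failure mode.

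There is also a structural difference in where the hardness chain is rooted. You propose re-deriving the base hardness from Theorem~\ref{thm: Main PM} (perfect matchings) via JVW base points on each hyperbola $H_{q}$. The paper instead takes the coefficient hardness for $\coeff{Z_{q}}$, $q\neq1$, directly from Dell et al.\ \cite[Propositions 4.1 and 4.3]{DBLP:journals/talg/DellHMTW14} (restated as Lemma~\ref{lem: tutte-coeff}) and feeds that into Theorem~\ref{thm: Block interpolation}. Your route is plausible but considerably more work, since you would need to verify that the JVW base-point reductions are parsimonious-enough and sparsity-preserving on each hyperbola; the paper simply does not need to reopen this.

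So: correct framework, correct high-level structure, but the choice of gadget family needs to be paths (stretching), not parallel edges (thickening), both to preserve simplicity and to keep the set of realizable weights infinite on the lines the theorem does not exclude. With stretching in place of thickening and with the coefficient hardness imported from Dell et al.\ rather than rebuilt from JVW base points, your outline becomes the paper's proof.
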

In \cite{DBLP:journals/talg/DellHMTW14}, only lower bounds of the
type $2^{\Omega(n/\log^{c}n)}$ could be shown for the Tutte polynomial on sparse simple graphs.
Please consider \cite[Figure~1]{DBLP:journals/talg/DellHMTW14} for
a plot of the points covered by Theorem~\ref{thm: Main Tutte}: Our
lower bound applies at all points for which these authors show any
super-polynomial lower bound.
While our theorem leaves open the non-easy points on the line $y=1$, that is, all points $(x,1)$ with $x\neq 1$,
tight lower bounds for these points have been found by Brand~et~al.~\cite{DBLP:conf/iwpec/BrandDR16} since the conference version of the present paper was published.
We thus have:
\begin{thm}[\cite{DBLP:conf/iwpec/BrandDR16}]
\label{thm: Tutte full}
If $(x,y)\in\mathbb{Q}^{2}$ is such that $(x-1)(y-1)=1$ or $(x,y)\in\{(1,1),(-1,-1),(0,-1),(-1,0)\}$, 
then $T(x,y)$ can be computed in polynomial time.
Otherwise, there is no $2^{o(n)}$ time algorithm for computing $T(x,y)$ on sparse simple graphs unless $\sharpETH$ fails.
\end{thm}

\subsection*{Organization of this paper}

In Section~\ref{sec:Preliminaries}, we survey the necessary preliminaries
from the theory of graph polynomials, polynomial interpolation, and
exponential-time complexity. Then, in Section~\ref{sec:Framework},
we introduce the interpolation framework used for later results. In
Section~\ref{sec:Applications}, we prove Theorems \ref{thm: Main PM}-\ref{thm: Main Tutte}
as applications of this framework.

\section{\label{sec:Preliminaries}Preliminaries}

For $k\in\mathbb{N}$, we abbreviate $[k]=\{1,\ldots,k\}$.
For sets $X$, write ${X \choose 2}$ for the set
of all unordered pairs with elements from $X$. The graphs in this
paper are finite, undirected, and simple. If $G$ is a graph, we implicitly
assume that $V(G)=[n]$ for some $n\in\mathbb{N}$, and consequently $E(G)\subseteq{[n] \choose 2}$.
Graphs may feature edge- or vertex-weights within intermediate steps
of arguments, but all such weights will ultimately be removed to obtain
hardness results on \emph{unweighted} graphs. 

For simplicity, we phrase our results using only rational numbers,
but they could be easily adapted to the real or complex numbers,
provided that these numbers are represented properly. 
We also write $x\gets y$ for substituting the expression $y$ into an indeterminate $x$. 

\subsection{\label{subsec:Graph-polynomials}Graph polynomials}

Our arguments and statements of results use \emph{graph polynomials},
which are functions that map graphs $G$ to polynomials $p(G)\in\mathbb{Q}[\pmult x]$,
where $\pmult x$ is some set of indeterminates. They are usually
defined such that isomorphic graphs $G$ and $G'$ are required to
satisfy $p(G)=p(G')$, but we ignore this restriction for our purposes.
As a notational convention, we abbreviate $p(G;\xi)=(p(G))(\xi)$ for the
evaluation of the polynomial $p(G)$ at a point $\xi$.

The arguably most famous graph polynomial is the Tutte polynomial \cite{brylawski1982tutte},
which we define in the following, along with the matching polynomial
and the independent set polynomial \cite{IndependenceSurvey}.
\begin{defn}
[Matching and independent set polynomials]\label{def: massendefinition}
Let $G$ be a graph and let $\M[G]$ denote the set of (not necessarily
perfect) matchings in $G$, that is, edge-subsets that are vertex-disjoint. For $M\in\M[G]$, let $\usat(G,M)$ denote
the set of unmatched vertices of $G$ in $M$. Then we define the
\emph{matching polynomial} $\mu$ (also called \emph{matching defect
polynomial}) as 
\[
\mu(G;x)=\sum_{M\in\M[G]}x^{|\usat(G,M)|}.
\]

Similarly, let $\mathcal{I}[G]$ denote the independent sets of $G$.
Then the \emph{independent-set polynomial }$I$ is 
\[
I(G;x)=\sum_{S\in\mathcal{I}[G]}x^{|S|}.
\]
\end{defn}

Note that both the matching polynomial and the independent-set polynomial
are weighted sums over its eponymous structures. A similar definition
applies in the case of the Tutte polynomial, but the weights are more intricate.

\begin{defn}
[Tutte polynomial] For a subset $A\subseteq E(G)$, let $k(G,A)$
denote the number of connected components in the edge-induced subgraph
$G[A]$. Then define the classical parameterization of the Tutte
polynomial as 
\[
T(G;x,y)=\sum_{A\subseteq E(G)}(x-1)^{k(G,A)-k(G,E)}(y-1)^{k(G,A)+|A|-|V|}.
\]
We also use the \emph{random-cluster formulation} of the Tutte polynomial:
\[
Z(G;q,w)=\sum_{A\subseteq E(G)}q^{k(G,A)}w^{|A|}.
\]
\end{defn}
The polynomials $Z$ and $T$ are essentially different parameterizations
of each other: As noted in \cite{DBLP:journals/talg/DellHMTW14},
with $q=(x-1)(y-1)$ and $w=y-1$, we have
\begin{equation}
T(G;x,y)=(x-1)^{-k(G,E)}(y-1)^{-|V(G)|} \cdot Z(G;q,w).\label{eq: tutte-relations}
\end{equation}
We will mostly use the random-cluster formulation of the Tutte polynomial.

\begin{defn}
[Perfect matching polynomial and permanent] 
Recall that we assume that for each graph $G$, there is some $n \in \mathbb N$ such that $V(G)=[n]$ and $E(G)\subseteq{V(G) \choose 2}$.
For $e\in{\mathbb{N} \choose 2}$,
let $x_{e}$ be an indeterminate. We write $\PM[G]$ for the set of
perfect matchings of $G$. Then the perfect matching polynomial is
defined as 
\[
\PerfMatch(G)=\sum_{M\in\PM[G]}\prod_{e\in M}x_{e}.
\]
See also \cite[Section 3]{Valiant2008}, and note that only finitely
many indeterminates are present in $\PerfMatch(G)$. If $G$ is bipartite,
we also denote $\PerfMatch(G)$ by the \emph{permanent} $\perm(G)$.
In doing so, we slightly abuse notation, since the permanent is defined on matrices $A$,
but we implicitly consider the bi-adjacency matrix $A$ of the bipartite graph $G$ when speaking of the permanent of $G$.
\end{defn}
For any graph polynomial $p$, we define two computational problems
$\coeff p$ and $\eval p$, and a family of problems $\evalfix Sp$
for subsets $S\subseteq\mathbb{Q}$.
\begin{lyxlist}{00.00.0000}
\item [{$\coeff p:$}] On input $G$, output the list of all coefficients of $p(G)$.
In this paper, we will consider this problem only for univariate graph
polynomials.
\item [{$\eval p:$}] On input $G$ and an arbitrary point $\xi$, evaluate
$p(G;\xi)$. Here, $\xi$ is to be considered as a rational-valued assignment to
the indeterminates of $p(G)$. We will often consider $\xi$ as vertex- or edge-weights that are
substituted into the indeterminates of $p(G)$.
\item [{$\evalfix Sp:$}] On input $G$ and a point $\xi$ whose coordinate-wise
entries are all from $S$, evaluate $p(G;\xi)$. The problem differs
from $\eval p$ in that only specific points $\xi$ are allowed as
input. Note that, if $p$ is univariate and $S=\{a\}$ is a singleton
set, then $\evalfix Sp$ simply asks to compute $p(G;a)$ for fixed
$a$ on input $G$. We write $\evalfix ap$ in this case.
\end{lyxlist}
\begin{example}
For the matching polynomial $\mu$, the problem $\eval{\mu}$ asks
to evaluate $\mu(G;\xi)$ when given as input a graph $G$ and a number
$\xi$. For fixed $a$, the problem $\evalfix a{\mu}$ asks to evaluate
$\mu(G;a)$ on input $G$. For instance, the problem $\evalfix 0{\mu}$
asks to count perfect matchings in a graph.
\end{example}
Rather than evaluating a multivariate graph polynomial $\pmult p$
like $\PerfMatch$ on an unweighted graph $G$ and a point $\xi$,
we often annotate edges/vertices of $G$ with the entries of $\xi$,
assuming $V(G)$ and $E(G)$ to be ordered. We then speak of evaluating
$\pmult p(G')$ on the weighted graph $G'$ derived from $G$ and
$\xi$ this way.

\subsection{Multivariate polynomial interpolation}

Given a univariate polynomial $p\in\mathbb{Q}[x]$ of degree $n$,
we can use Lagrange interpolation to compute the coefficients of $p$
when provided with the set of evaluations $\{(\xi,p(\xi))\mid\xi\in\Xi\}$
for any set $\Xi\subseteq\mathbb{Q}$ of size $n+1$. It is known that polynomial interpolation can be
generalized to multivariate polynomials $p\in\mathbb{Q}[\mathbf{x}]$,
for instance, if $\Xi$ is a sufficiently large grid.
\begin{lem}
\label{lem: multivar-interpolation}Let $p\in\mathbb{Q}[x_{1},\ldots,x_{n}]$
be a multivariate polynomial such that, for all $i\in[n]$, the degree
of $x_{i}$ in $p$ is bounded by $d_{i}\in\mathbb{N}$. 
Furthermore, assume we are given sets $\Xi_{i}\subseteq\mathbb{Q}$ for $i\in[n]$ such that $|\Xi_{i}|=d_{i}+1$ for all $i\in[n]$. Consider the cartesian product of these sets, that is,
\[
\Xi:=\Xi_{1}\times\ldots\times\Xi_{n}.
\]
Then we can compute the coefficients of $p$ with $O(|\Xi|^{3})$
arithmetic operations when given as input the set 
\[
\{(\xi,p(\xi))\mid\xi\in\Xi\}.
\]
\end{lem}
\begin{proof}
For $s,t,s',t'\in\mathbb{N}$ and matrices $A\in\mathbb{Q}^{s\times t}$
and $B\in\mathbb{Q}^{s'\times t'}$, we write $A\otimes B$ for the
Kronecker product of $A$ and $B$, which is the matrix $A\otimes B\in\mathbb{Q}^{s\cdot s'\times t\cdot t'}$
whose rows are indexed by $[s]\times[s']$, whose columns are indexed
by $[t]\times[t']$, and which satisfies 
\[
(A\otimes B)_{(i,i'),(j,j')}=A_{i,j}\cdot B_{i',j'}\quad\mbox{for }(i,i')\in[s]\times[s']\mbox{ and }(j,j')\in[t]\times[t'].
\]

For $\ell\in[n]$, enumerate $\Xi_{\ell}=\{a_1^{(\ell)},\ldots,a_{d_\ell+1}^{(\ell)}\}$
and let $A^{(\ell)}$ denote the
Vandermonde matrix of dimensions  $(d_{\ell}+1)\times(d_{\ell}+1)$ with $A_{i,j}^{(\ell)}=(a_{i}^{(\ell)})^{j}$ for all
$i,j\in[d_{\ell}+1]$. It is well-known that each Vandermonde matrix
$A^{(\ell)}$ for $\ell\in[n]$ has full rank, provided that $a_{i}^{(\ell)}\neq a_{i'}^{(\ell)}$
for all $i\neq i'$. This condition is guaranteed in our setting.
Now define 
\[
A:=A^{(1)}\otimes\ldots\otimes A^{(n)}.
\]
Since each matrix $A^{(\ell)}$ for $\ell\in[n]$ has full rank, so does
the matrix $A$, by an elementary property of the rank of Kronecker
products \cite[Corollary 13.11]{Laub2004}. 

Let $\mathbf{c}$ denote
the vector that lists the coefficients of $p$ in lexicographic order\footnote{This vector includes the coefficients of all monomials with degree
at most $d_{i}$ in $x_{i}$, even if some of these coefficients may
be zero.}, and let $\mathbf{v}$ denote the vector that lists the evaluations
$p(\xi)$ for $\xi\in\Xi$ in lexicographic order. Then it can be
verified that 
$A\mathbf{c}=\mathbf{v}$.
Since $A$ has full rank, this system of linear equations can be solved
with $O(|\Xi|^{3})$ arithmetic operations for $\mathbf{c}$, and
we obtain the coefficients of $p$.
\end{proof}
\begin{rem}
By exploiting faster methods for solving linear systems of equations,
the running time above could be lowered from $O(|\Xi|^{3})$ to $O(|\Xi|^{\omega})$
operations, where $\omega<2.4$ is the exponent of matrix multiplication.
This is however non-essential for our reductions.
\end{rem}

\subsection{Exponential-time complexity}
\label{subsec:Exponential-time-complexity}
We build upon the counting exponential-time hypothesis $\sharpETH$
introduced in \cite{DBLP:journals/talg/DellHMTW14}, which is a variant of
the corresponding hypothesis $\ETH$ for decision problems \cite{Impagliazzo.Paturi2001,IPZ01}. 
\begin{defn}
\label{def: ETH}The \emph{counting exponential-time hypothesis} $\sharpETH$
is the following claim: There is a constant $\epsilon>0$ such that
no deterministic algorithm with running time $O(2^{\epsilon n})$
can count the satisfying solutions of $3$-CNF formulas $\varphi$
with $n$ variables.
\end{defn}
Note that $\sharpETH$ rules out $2^{o(n)}$ time algorithms for counting
satisfying assignments of $3$-CNF formulas with $n$ variables. In
fact, $\sharpETH$ is often stated as claiming precisely this lower
bound. However, this latter statement is a priori not equivalent to
$\sharpETH$, as there could be, say, an uncomputable sequence of
$O(2^{\epsilon n})$ time algorithms with $\epsilon\to0$ for counting
satisfying assignments. For this reason, some authors choose to characterize
the original definitions of $\ETH$ and $\sharpETH$ as \emph{nonuniform}
\cite{DBLP:conf/iwpec/ChenEF12}.

A particularly useful tool for proving lower bounds under $\sharpETH$
is the \emph{sparsification lemma}, which was first shown
for the decision version $\ETH$ \cite[Corollary 1]{IPZ01} and later
adapted to counting problems \cite[Lemma A.1]{DBLP:journals/talg/DellHMTW14}.
This result allows us to assume that the formulas $\varphi$ in Definition~\ref{def: ETH}
are sparse, i.e., even an $2^{o(m)}$ time algorithm would refute
$\sharpETH$, where $m$ is the number of clauses of $\varphi$. Note
that this indeed strengthens $\sharpETH$, as we may assume 
$n\leq m$, whereas we can a priori only guarantee $m=O(n^{3})$ for $3$-CNF formulas.
\begin{thm}
Assuming $\sharpETH$, there is a constant $\epsilon>0$ such that
no deterministic algorithm with running time $O(2^{\epsilon m})$ can count the
satisfying solutions of $3$-CNF formulas $\varphi$ with $m$ clauses.
\end{thm}
This theorem is shown by an application of so-called sub-exponential
reduction families \cite[Section 1.1.4]{IPZ01}. In the following
definition, we adapt these reductions for our particular applications
involving graph polynomials. That is, we restrict our definition to
problems $\A$ and $\B$ that receive graphs as inputs, and we ensure
that the instances generated by the reduction are sparse.
\begin{defn}
\label{def: subexponential-reductions}A \emph{sub-exponential reduction
family} from problem $\A$ to $\B$ is an algorithm $\mathbb{T}$
with oracle access for $\B$. Its inputs are pairs $(G,\epsilon)$
where $G$ is an input graph for $\A$, and $\epsilon$ with $0<\epsilon\leq1$
is a running time parameter, such that

\begin{enumerate}
\item $\mathbb{T}$ computes $\A(G)$, and it does so in time $f(\epsilon)\cdot2^{\epsilon\cdot|V(G)|}$,
and 
\item $\mathbb{T}$ only invokes the oracle for $\B$ on graphs $G'$ with
at most $g(\epsilon)\cdot(|V(G)|+|E(G)|)$ vertices and edges.
\end{enumerate}
In these statements, $f$ and $g$ are computable functions that depend
only on $\epsilon$. We write $\A\,\leqSERF\,\B$ if such a reduction
exists.
\end{defn}
That is, the running time of $\mathbb{T}$ (and hence, the number
of oracle queries) can be chosen as $O(2^{\epsilon n})$ for arbitrarily
small $\epsilon$, but this comes at the cost of incurring a ``blowup
factor'' of $g(\epsilon)$ in the reduction images. It can be verified
that sub-exponential reductions preserve lower bounds,
see \cite[Section 1.1.4]{IPZ01}:
\begin{lem}
\label{lem: subexponential reductions work}Let $\A$ and $\B$ be
problems that satisfy $\A\leqSERF\B$, and assume that for all $\epsilon,C>0$,
there is an $O(2^{\epsilon n})$ time algorithm for $\B$ on graphs
with $n$ vertices and at most $Cn$ edges. Then, for all $\delta,D>0$,
there is an $O(2^{\delta n})$ time algorithm for $\A$ on graphs
with $n$ vertices and at most $Dn$ edges.
\end{lem}
\begin{cor}
\label{cor: subexponential reductions work, contraposition}If $\A\leqSERF\B$
and there are $\epsilon,C>0$ such that $\A$ cannot be solved in
time $O(2^{\epsilon n})$ on graphs with $n$ vertices and at most
$Cn$ edges, then there are $\delta,D>0$ such that $\B$ cannot be
solved in time $O(2^{\delta n})$ on graphs with $n$ vertices and
at most $Dn$ edges. 
\end{cor}
If the prerequisites of the above corollary hold, then it is in particular
true that there is a constant $D$ such that $\B$ cannot be solved in time $2^{o(n)}$ on graphs with
$n$ vertices and at most $Dn$ edges.

\section{\label{sec:Framework}The Block Interpolation Framework}

In this section, we show how to obtain tight lower bounds for evaluating graph polynomials under $\sharpETH$ by means
of multivariate polynomial interpolation. More specifically, for a
general class of univariate graph polynomials $p$, we show that,
for certain fixed $\xi\in\mathbb{Q}$, we can reduce the coefficient problem of $p$ to the evaluation problem of $p$ on $\xi$.
\begin{equation}
\coeff p\leqSERF\evalfix{\xi}p.\label{eq: ETH full-reduction}
\end{equation}

This is useful due to the following reasons: Firstly, many counting
problems can be expressed as evaluation problems $\evalfix{\xi}p$
for adequate graph polynomials $p$ and points $\xi$. For instance,
the Tutte polynomial collects an abundance of such examples.
Secondly, as discussed in the introduction, many classical $\sharpP$-hardness
proofs for $\evalfix{\xi}p$ first establish $\sharpP$-hardness for
$\coeff p$ and then reduce this to the evaluation problem by some
form of interpolation. In many cases, the classical $\sharpP$-hardness
proof for $\coeff p$ already yields a tight lower bound under $\sharpETH$.
Our technique then allows to transfer this lower bound to
$\evalfix{\xi}p$.

The remainder of this section is structured as follows: In Section~\ref{subsec:Admissible-graph-polynomials},
we first describe the ``format'' required from a univariate graph
polynomial $p$ for our framework to apply. Then we show in Section~\ref{subsec:Using-multivariate-interpolation}
how to perform the reduction
\begin{equation}
\coeff p\leqSERF\evalfix S{\pmult p},\label{eq: ETH coeff to eval_wt}
\end{equation}
where $\pmult p$ is a certain ``multivariate version'' of $p$, as mentioned in the introduction,
and $S\subseteq\mathbb{Q}$ is a set whose size depends only upon the running
time parameter $\epsilon$ in the sub-exponential reduction family,
but not on the size of the input graph. In Section~\ref{subsec:Weight-simulation},
we then show how to reduce 
\begin{equation}
\evalfix S{\pmult p}\leqSERF\evalfix{\xi}p\label{eq: ETH eval_wt to eval}
\end{equation}
by means of gadget families, provided that these families exist. Pipelining the
reductions (\ref{eq: ETH coeff to eval_wt}) and (\ref{eq: ETH eval_wt to eval})
then gives the full reduction (\ref{eq: ETH full-reduction}).

\subsection{\label{subsec:Admissible-graph-polynomials}Admissible Graph Polynomials}

Our framework applies to univariate graph polynomials $p$ that admit
a canonical multivariate generalization. More specifically, we call
$p$ \emph{subset-admissible} if $p$ is induced by a \emph{sieving
function} $\chi$ which filters the structures counted by $p$, and
a \emph{weight selector} $\omega$ which assigns a particular kind of 
weight to each of these structures. While this definition may seem abstract at first, we will soon observe that various popular graph polynomials can be expressed naturally in this form.
\begin{defn}
\label{def:admissible}Let $\mathcal{G}$ denote the set of all unweighted
simple graphs and recall that, for each graph $G\in\mathcal{G}$, we assume that there exists some $n \in \mathbb N$ such that $V(G)=[n]$
 and $E(G)\subseteq{[n] \choose 2}$. Let
$\mathcal{V}=\mathbb{N}$ denote the set of all possible vertices of unweighted
simple graphs, and let $\mathcal{E}={\mathbb{N} \choose 2}$ denote
the set of all possible edges of such graphs. We also write $\mathcal{F}=\mathcal{V}\cup\mathcal{E}$.

For a \emph{sieve function} $\chi:\mathcal{G}\times2^{\mathcal{F}}\to\mathbb{Q}$
and a \emph{weight selector} $\omega:\mathcal{G}\times2^{\mathcal{F}}\to2^{\mathcal{F}}$,
we define the graph polynomial \emph{induced} by the pair $(\chi,\omega)$ as
\begin{equation}
p_{\chi,\omega}(G;x)=\sum_{A\subseteq V(G)\cup E(G)}\chi(G,A)\cdot x^{|\omega(G,A)|}.\label{eq:general-poly}
\end{equation}
We say that $p$ is \emph{subset-admissible} if $p=p_{\chi,\omega}$ for
some $(\chi,\omega)$ as above.
\end{defn}
\begin{rem*}
Note that $\chi$ and $\omega$ may be partial functions in the above definition, since, e.g.,
the value of $\chi(G,A)$ is irrelevant if $A\not\subseteq V(G)\cup E(G)$.
\end{rem*}
In the following, we observe that the matching polynomial $\mu$ and the independent set polynomial $I$
from Definition~\ref{def: massendefinition} are subset-admissible.
It would be nice to show the same for the Tutte polynomials $T$ and
$Z$, but this fails for syntactic reasons, since we defined admissible polynomials to be univariate. Instead, we will work with restrictions of
$Z$ to $Z_{q}:=Z(q,\cdot)$ for fixed $q\in\mathbb{Q}$. 
\begin{example}
\label{exa: tutte-reformulated}Given a sentence $\phi$, define $[\phi]=1$
if $\phi$ is true, and $[\phi]=0$ otherwise. With this notion, the
matching polynomial $\mu$ is induced by
\begin{eqnarray*}
 & \chi: & (G,A)\mapsto[A\in\mathcal{M}[G]],\\
 & \omega: & (G,A)\mapsto\usat(G,A),
\end{eqnarray*}
and $I$ is induced similarly by $\chi:(G,A)\mapsto[A\in\mathcal{I}[G]]$ and
$\omega:(G,A)\mapsto A$.

For $q\in\mathbb{Q}\setminus\{0\}$, the polynomial $Z_{q}=Z(q,\cdot)$
is induced by $\chi:(G,A)\mapsto q^{k(G,A)}$ and $\omega:(G,A)\mapsto A$.
We stress again that $Z_{q}\in\mathbb{Q}[x]$ is a univariate restriction
of $Z$ for fixed $q\in\mathbb{Q}$.
\end{example}
Every subset-admissible graph polynomial of the form $p_{\chi,\omega}$ admits a canonical
\emph{multivariate generalization} $\pmult p{}_{\chi,\omega}$ on
indeterminates $\mathbf{x}=\{x_{a}\mid a\in\mathcal{F}\}$, which
is given by
\begin{equation}
\pmult p_{\chi,\omega}(G;\mathbf{x})=\sum_{A\subseteq V(G)\cup E(G)}\chi(G,A)\prod_{a\in\omega(G,A)}x_{a}.\label{eq: multivar-p}
\end{equation}
Please note that only finitely many indeterminates from $\mathbf{x}$
are present in $\pmult p_{\chi,\omega}(G)$ for any (finite) graph
$G$. Compare (\ref{eq: multivar-p}) to (\ref{eq:general-poly}):
It is clear that, when substituting
$x_{a}\gets x$ for all $a\in\mathcal{F}$, the multivariate polynomial $\pmult p_{\chi,\omega}$
coincides with the univariate polynomial $p_{\chi,\omega}$. Note also that $\pmult p$
is multilinear by definition. Similar multivariate generalizations
were known, e.g., for the Tutte polynomial \cite{Sokal2005}.
\begin{example}
\label{exa: strange-p-becomes-perfmatch}Consider the polynomial $p=p_{\chi,\omega}$
induced by 
\begin{eqnarray*}
 & \chi: & (G,A)\mapsto[A\in\PM[G]],\\
 & \omega: & (G,A)\mapsto\ A,
\end{eqnarray*}
This polynomial $p$ admits the simple expression $p(G)=m_{G}\cdot x^{|V(G)|/2}$,
where $m_{G}$ denotes the number of perfect matchings in $G$. Note that $p(G)$ contains at most one monomial. Its multivariate generalization however gives us the richer structure $\pmult p(G)=\PerfMatch(G)$.

Furthermore, for fixed $q\in \mathbb Q$, the multivariate generalization of $Z(q,\cdot)$ yields
the so-called \emph{multivariate Tutte polynomial }considered in \cite{Sokal2005}:\emph{
}
\[
\pmult Z(G;q,\cdot)=\sum_{A\subseteq E(G)}q^{k(G,A)}\prod_{e\in A}x_{e}.
\]
\end{example}
If $p$ is a univariate subset-admissible polynomial and $\pmult p$
is its multivariate generalization, then the following simple relation
holds between the coefficients of $p$ and $\pmult p$:
\begin{lem}
\label{prop: coefficients}For any monomial $\theta$, let $c_{\theta}$
denote the coefficient of $\theta$ in $\pmult p$. For $k\in\mathbb{N}$,
let $C_{k}$ denote the set of monomials with total
power $k$ in $\pmult p$. Then for any $k\in\mathbb{N}$, the coefficient of $x^{k}$ in
$p$ is equal to $\sum_{\theta\in C_{k}}c_{\theta}$.
\end{lem}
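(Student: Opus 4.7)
The plan is to obtain the claim by directly comparing the two definitions of $p$ and $\vec{p}$ given in equations~(\ref{eq:general-poly}) and~(\ref{eq: multivar-p}), and observing that the univariate polynomial is obtained from the multivariate one by the uniform substitution $x_a \gets x$ for all $a \in \mathcal{F}$.

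More concretely, I would first fix a graph $G$ and write $\vec{p}(G;\mathbf{x}) = \sum_\theta c_\theta \cdot \theta$, where $\theta$ ranges over monomials in the indeterminates $\{x_a \mid a \in \mathcal{F}\}$; since $\vec{p}$ is multilinear, each such $\theta$ has the form $\prod_{a \in S} x_a$ for some $S \subseteq V(G) \cup E(G)$, and its total power equals $|S|$. Applying the substitution $x_a \gets x$ to the term $\chi(G,A) \prod_{a \in \omega(G,A)} x_a$ in~(\ref{eq: multivar-p}) yields $\chi(G,A) \cdot x^{|\omega(G,A)|}$, which matches the summand of~(\ref{eq:general-poly}) exactly. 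Hence
\[
p(G;x) = \vec{p}(G;\mathbf{x})\bigm|_{x_a \gets x \text{ for all } a \in \mathcal{F}}.
\]

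Under this substitution each monomial $\theta$ of total power $k$ collapses to $x^k$, so grouping terms by total power gives
\[
p(G;x) = \sum_{k \in \mathbb{N}} \Bigl(\sum_{\theta \in C_k} c_\theta\Bigr) x^k,
\]
from which the claimed identity for the coefficient of $x^k$ follows by reading off the $x^k$-coefficient. There is no real obstacle here: the statement is essentially a bookkeeping observation about the relationship between $p$ and its multivariate lift, and the only thing to verify carefully is that the substitution $x_a \gets x$ in~(\ref{eq: multivar-p}) indeed reproduces~(\ref{eq:general-poly}) term-by-term, which is immediate from the identity $\prod_{a \in \omega(G,A)} x = x^{|\omega(G,A)|}$.
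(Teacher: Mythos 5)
Your proposal is correct and is essentially the paper's own proof: both observe that substituting $x_a \gets x$ for all $a \in \mathcal{F}$ turns $\vec{p}$ into $p$ term-by-term, and that each monomial of total power $k$ collapses to $x^k$, so the coefficient of $x^k$ in $p$ is the sum of the $c_\theta$ over $\theta \in C_k$. You simply spell out the bookkeeping in a bit more detail than the paper does.
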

\begin{proof}
When substituting $x_{a}\gets x$ for all $a\in\mathcal{F}$, we obtain
$p$ from $\pmult p$, and the monomials transformed to $x^{k}$ are
precisely the members of $C_{k}$. This proves the claim by collecting the coefficients of these monomials.
\end{proof}

\subsection{\label{subsec:Using-multivariate-interpolation}First Reduction Step:
Multivariate Interpolation}

Let $p=p_{\chi,\omega}$ be a subset-admissible polynomial with multivariate generalization
$\pmult p$. For ease of presentation, we assume for now that $\omega:\mathcal{G}\times2^{\mathcal{F}}\to2^{\mathcal{E}}$,
that is, $\omega$ maps only into edge-subsets rather than subsets
of edges and vertices. The general case is shown identically, with
more notational overhead.

We perform the reduction $\coeff p\leqSERF\eval{\pmult p}$ by means of interpolation.
Recall that, in the univariate case, to obtain $p(G)$ for an $m$-edge
graph $G$, we require the evaluations of $p(G;\xi)$ at $m+1$ distinct
points $\xi$. For the multivariate generalization $\pmult p$, we
could in principle interpolate via Lemma~\ref{lem: multivar-interpolation}: Since
$\pmult p$ is multilinear, this requires the evaluations of $\pmult p(G;\xi)$
on a grid $\Xi$ with two distinct values per coordinate, say $\Xi=[2]^{m}$.
By Lemma~\ref{prop: coefficients}, the coefficients of $p$ can
be obtained from those of $\pmult p$, so we could interpolate $\pmult p$
to recover $p$. 

While this detour seems extremely wasteful due to its $2^{m}$ (rather
than $m+1$) incurred evaluations, it yields the following reward:
For each variable $x_{e}$ in $\pmult p$, Lemma~\ref{lem: multivar-interpolation}
only requires us to substitute \emph{two} distinct values (or \emph{weights})
into $x_{e}$, whereas interpolation on $p$ requires $m+1$ distinct
substitutions to its only variable $x$. The small number of distinct
weights will prove very useful, since each such weight will be simulated
by a certain gadget. If there are only two weights to simulate, then
we require only two fixed gadgets, whose sizes are trivially bounded
by $O(1)$.

However, to interpolate $\pmult p$, we still need the prohibitively
large number of $2^{m}$ evaluations. To overcome this, we trade off
the number of evaluations with the numbers of distinct values per
edge, and thus, with the size of the gadgets ultimately required.
To this end, we group the edges into blocks and treat all edges within each block identically,
similar to \cite{DBLP:journals/talg/CyganDLMNOPSW16}.
At this point, the full power of sub-exponential reduction families from
Definition~\ref{def: subexponential-reductions} is used crucially.
\begin{lem}
\label{lem: Weighted block-wise interpolation}Let $p$ be subset-admissible,
with multivariate generalization $\pmult p$, and let $W=(w_{0},w_{1},\ldots)$
be an infinite recursively enumerable sequence of pairwise distinct
numbers in $\mathbb{Q}$.

Then $\coeff p\leqSERF\evalfix W{\pmult p}$ holds by a reduction
that satisfies the following for all inputs $(G,\epsilon)$: There
is some number $d=d(\epsilon)$ depending only upon $\epsilon$, such that all oracle queries
$\pmult p(G')$ are asked only on graphs $G'$ obtained from $G$
by introducing edge-weights from $W_{d}=\{w_{0},\ldots,w_{d}\}$.
\end{lem}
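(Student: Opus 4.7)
The plan is to apply Lemma~\ref{lem: multivar-interpolation} to a folded version of $\vec{p}$ that depends on only constantly many indeterminates per ``block'' of edges. Write $m=|E(G)|$ and expand $\vec{p}(G;\mathbf{x})=\sum_{A\subseteq E(G)}c_{A}\prod_{e\in A}x_{e}$; since $\vec{p}$ is multilinear, every $x_{e}$ has individual degree at most $1$.

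Given $(G,\epsilon)$, I choose an integer $b=b(\epsilon)$ depending only on $\epsilon$ (its calibration is discussed below) and set $d:=b$. I partition $E(G)$ arbitrarily into $k\leq\lceil m/b\rceil$ blocks $B_{1},\ldots,B_{k}$ of size at most $b$, and define the substitution $\sigma\colon x_{e}\mapsto y_{i}$ whenever $e\in B_{i}$. Setting $q(y_{1},\ldots,y_{k}):=\vec{p}(G;\sigma(\mathbf{x}))$ yields a polynomial whose individual degree in $y_{i}$ is at most $|B_{i}|\leq b$. To interpolate $q$, the algorithm enumerates the grid $\Xi=W_{d}^{k}$ of size $(b+1)^{k}$; for each $\xi=(\xi_{1},\ldots,\xi_{k})\in\Xi$, it forms the $W$-weighted graph $G_{\xi}$ on the same vertex and edge set as $G$ by assigning weight $\xi_{i}$ to every edge of $B_{i}$, and asks the oracle for $\vec{p}(G_{\xi})=q(\xi)$. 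Since the entries of $W$ are pairwise distinct, the Vandermonde hypothesis of Lemma~\ref{lem: multivar-interpolation} is met, so every coefficient of $q$ is recovered in $\mathcal{O}(|\Xi|^{3})$ arithmetic operations. Finally, I extract
\[
[x^{j}]\,p(G;x)\;=\;\sum_{a_{1}+\cdots+a_{k}=j}\bigl[y_{1}^{a_{1}}\cdots y_{k}^{a_{k}}\bigr]\,q(y_{1},\ldots,y_{k}),
\]
which follows from Lemma~\ref{prop: coefficients}, because further substituting $y_{i}\gets x$ in $q$ specializes $\vec{p}$ back to $p$.

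The two SERF conditions of Definition~\ref{def: subexponential-reductions} are essentially free to check: the oracle is invoked only on $G_{\xi}$, which has the same vertex and edge set as $G$, and the weights used lie in $W_{d}\subseteq W$, with $d$ a function of $\epsilon$ alone. The actual technical nub is the calibration of $b$: the total cost is dominated by $(b+1)^{k}\cdot|V(G)|^{\mathcal{O}(1)}$ (queries plus interpolation), and in the sparse regime of interest where $m=\mathcal{O}(|V(G)|)$ this must fit under $f(\epsilon)\cdot2^{\epsilon|V(G)|}$. Choosing $b$ as the smallest integer for which $\log_{2}(b+1)/b$ falls below a suitable multiple of $\epsilon$ makes $b$ a pure function of $\epsilon$ and drives the query count below the SERF budget, yielding the required sub-exponential reduction family.
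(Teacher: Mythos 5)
Your proposal is correct and matches the paper's own proof essentially step for step: partition the edge set into $\lceil m/b\rceil$ blocks of size at most $b=b(\epsilon)$, fold $\vec{p}$ into a polynomial $q$ of per-variable degree at most $b$, interpolate on the grid $W_d^{\,k}$ via Lemma~\ref{lem: multivar-interpolation}, recover the coefficients of $p$ from those of $q$ via Lemma~\ref{prop: coefficients}, and calibrate $b$ so that $(b+1)^{\lceil m/b\rceil}$ (and its cube for the linear solve) falls under the $2^{\epsilon n}$ budget on sparse graphs. The only cosmetic difference is notation ($b,k$ versus the paper's $d,t$) and your slightly more explicit remark that the SERF time bound relies on $m=\mathcal{O}(|V(G)|)$.
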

When invoking Lemma~\ref{lem: Weighted block-wise interpolation},
the list $W$ contains the weights that can be simulated by gadgets.
Note that \emph{any} such list $W$ can be used if $W$ is infinite
and recursively enumerable. Furthermore, note that $\pmult p$ is
evaluated only on edge-weighted versions of $G$ itself; properties
such as bipartiteness of $G$ or its size are hence trivially preserved.
\begin{proof}
[Proof of Lemma \ref{lem: Weighted block-wise interpolation}]Let
$d\in\mathbb{N}$ be a parameter, to be chosen later depending on
$\epsilon$, and let $G=(V,E)$ be an $m$-edge graph for which we
want to determine the coefficients of $p=p(G)$. Let 
\[
\pmult x=\{x_{e}\mid e\in E\}
\]
denote the indeterminates of $\pmult p$ and note that both $p$ and
$\pmult p$ have maximum degree $m$ by definition.

In the first step, partition $E$ into $t:=\lceil m/d\rceil$ sets
$E_{1},\ldots,E_{t}$ of size at most $d$ each (which we call \emph{blocks}),
using an arbitrary equitable assignment of edges to blocks. Define
new indeterminates 
\[
\pmult y=\{y_{1},\ldots,y_{t}\}
\]
and a new multivariate polynomial $\pmult q\in\mathbb{Q}[\pmult y]$
from $\pmult p$ by substituting 
\[
x_{e}\gets y_{i}\mbox{\quad}\mbox{for all }i\in[t]\mbox{ and }e\in E_{i}.
\]

We are working with three polynomials, namely $p$, $\pmult p$ and
$\pmult q$, summarized in Table~\ref{tab:interp-polynomials} for
convenience.
\begin{table}
\begin{centering}
\begin{tabular}{c|c|c|c|}
 & $p\in\mathbb{Q}[x]$ & $\pmult p\in\mathbb{Q}[\pmult x]$ & $\pmult q\in\mathbb{Q}[\pmult y]$\tabularnewline
\hline 
number of indeterminates & $1$ & $m$ & $t=\lceil m/d\rceil$\tabularnewline
\hline 
max.~degree per indeterminate & $m$ & $1$ & $d$\tabularnewline
\hline 
max.~number of monomials & $m+1$ & $2^{m}$ & $(d+1)^{t}$\tabularnewline
\hline 
\end{tabular}
\par\end{centering}
\caption{\label{tab:interp-polynomials}The polynomials $p$, $\protect\pmult p$
and $\protect\pmult q$ appearing in the proof of Lemma~\ref{lem: Weighted block-wise interpolation}.}
\end{table}
 While the total degree of $\pmult q$ is bounded by $m$, the degree
of each indeterminate $y_{i}$ in $\pmult q$ is bounded by $d$,
since each block contains at most $d$ edges. Hence, the number of
monomials in $\pmult q$ is at most $(d+1)^{t}=2^{d'm}$ with $d'=O(\log(d)/d)$.
Note that $d'\to0$ as $d\to\infty$.

We will ultimately obtain the coefficients of $\pmult q$ via interpolation,
but first, let us observe that the coefficients of $\pmult q$ allow us
to determine those of the univariate version $p$. Write $c_{k}^{p}$
for the coefficient of $x^{k}$ in $p$ and $c_{\theta}^{\pmult q}$
for the coefficient of the monomial $\theta$ in $\pmult q$. Analogously
to Lemma~\ref{prop: coefficients}, we have $c_{k}^{p}=\sum_{\theta\in C_{k}}c_{\theta}^{\pmult q}$
where $C_{k}$ for $k\in\mathbb{N}$ is the set of all monomials with
total power $k$ in $\pmult q$. This allows us to compute the coefficients
of $p$ from those of $\pmult q$.

It remains to describe how to obtain the coefficients of $\pmult q$.
For this, recall the definition of $W_{d}$ from the statement of the lemma. We
evaluate $\pmult q$ on the grid $\Xi=(W_{d})^{t}$ using the oracle
for $\evalfix W{\pmult p}$: For each $\xi\in\Xi$, substitute $y_{i}\gets\xi_{i}$
for all $i\in[t]$ to obtain an edge-weighted graph $G_{\xi}$ that
contains only weights from $W_{d}$, and for which we can thus compute
$\pmult p(G_{\xi})$ by an oracle call to $\evalfix W{\pmult p}$. 

Using $|\Xi|=(d+1)^{t}=2^{d'm}$ oracle calls and grid interpolation
via Lemma~\ref{lem: multivar-interpolation}, we obtain all coefficients
of $\pmult q$ with $O(2^{3d'm})$ arithmetic operations. By definition
of $\pmult p$ and $\pmult q$ and the set $W_d$, each arithmetic operation involves
numbers on at most $\O(m)\cdot g(d)$ bits for a computable function $g$.
Since $d'\to0$ as $d\to\infty$, we can pick $d$ large enough such
that $3d'\leq\epsilon$, and thus achieve running time $O(2^{\epsilon m})$.
No vertices or edges were added to $G$ during this reduction.
\end{proof}

\subsection{\label{subsec:Weight-simulation}Second Reduction Step: Weight Simulation
by Gadgets}

Lemma~\ref{lem: Weighted block-wise interpolation} gives
a sub-exponential reduction family from $\coeff p$ to $\eval{\pmult p}$
that maps instances $(G,\epsilon)$ to edge-weighted versions
obtained from $G$ by introducing $f(\epsilon)$ distinct edge-weights for some computable function $f$.
For the full reduction, this latter problem must be reduced to $\evalfix{\xi}p$
for \emph{fixed} $\xi\in\mathbb{Q}$. 

This may not work for all $\xi\in\mathbb{Q}$: For instance, the evaluation problem $\evalfix 0I$
for the independent-set polynomial $I$ at the point $0$ is trivial. We must
hence impose several conditions on $\xi$ to enable this reduction.
\begin{defn}
\label{def: Weight simulation}Let $p$ be subset-admissible, let
$\xi\in\mathbb{Q}$ and 

\begin{itemize}
\item let $W=(w_{0},w_{1},\ldots)$ be a sequence of pairwise distinct values
in $\mathbb{Q}$,
\item let $\mathcal{H}=(H_{0},H_{1},\ldots)$ be a sequence of \emph{edge-gadgets},
which are triples $(H,u,v)$ with a graph $H$ and \emph{attachment
vertices} $u,v\in V(H)$, and 
\item let $F:\mathcal{G}\times\mathbb{Q}\to\mathbb{Q}\setminus\{0\}$ be
a polynomial-time computable function, which we call a \emph{factor
function}.
\end{itemize}
If $G$ is an edge-weighted graph with weights from $W$, let $T(G)$ be the unweighted graph obtained
by replacing, for $i\in\mathbb{N}$, each edge $uv\in E(G)$ of weight
$w_{i}$ with a fresh copy of the edge-gadget $H_{i}$ by identifying $u,v$ across
$G$ and $H_{i}$. 

We say that \emph{$(\mathcal{H},F)$ allows to reduce $\evalfix W{\pmult p}$
to $\evalfix{\xi}p$ }if the following holds: Whenever $G$ is a graph
with edge-weights from $W$, then 
\begin{equation}
\pmult p(G)=\frac{p(T(G);\xi)}{F(G,\xi)}.\label{eq: Simulate weights}
\end{equation}
\end{defn}
\begin{rem}
\label{rem:vertex-weights}The same definition applies to vertex-weighted
graphs; here we use \emph{vertex-gadgets}, which are pairs $(H,v)$
with an attachment vertex $v\in V(H)$. Vertex-gadgets are inserted
at a vertex $v\in V(G)$ by identifying $v$ in $H$ and $G$.
\end{rem}
At the end of this subsection, we discuss several possible extensions
of this definition. As a first example for the notions introduced
in Definition~\ref{def: Weight simulation}, we consider (well-known)
edge-gadgets for $\PerfMatch$.
\begin{example}
\label{exa: gadgets-perfmatch}Let $p$ denote the polynomial from
Example~\ref{exa: strange-p-becomes-perfmatch}, whose multivariate generalization satisfies $\pmult p=\PerfMatch$.
For $k\in\mathbb{N}$, let $H_{k}$ be the edge-gadget obtained by placing
$k$ parallel edges between two fresh vertices $u$ and $v$ and subdividing each
edge twice. Let $\mathcal{H}=(H_{1},H_{2},\ldots)$, let $\mathbb{N}=(1,2,3,\ldots)$
and let $F$ denote the function mapping all inputs to $1$. Then
it can be seen that $(\mathcal{H},F)$ allows to reduce $\evalfix{\mathbb{N}}{\pmult p}$
to $\evalfix 1p$.
\end{example}
We easily obtain the following lemma from Definition~\ref{def: Weight simulation}.
\begin{lem}
\label{lem: weight-reduction}Let $W=(w_{0},w_{1},\ldots)$ and let
$(\mathcal{H},F)$ allow to reduce $\evalfix W{\pmult p}$ to $\evalfix{\xi}p$.
Let $G$ be a graph with edge-weights from $W$. Then we can 
compute $\pmult p(G)$ from $p(T(G);\xi)$ in polynomial time via (\ref{eq: Simulate weights}).
If $G$ has $n$ vertices and $m$ edges, and only contains
edge-weights $w_{i}$ with $i\leq t$ for some $t\in\mathbb{N}$,
then $T(G)$ has $O(n+sm)$ vertices and edges, where $s=\max_{i\in[t]}|V(H_{i})|+|E(H_{i})|$
depends only on $\mathcal{H}$ and $t$.
\end{lem}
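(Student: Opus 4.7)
The statement is essentially a direct unpacking of Definition~\ref{def: Weight simulation}, so my plan is to derive both claims from that definition without introducing any new machinery.

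For the first claim, I would appeal immediately to equation~(\ref{eq: Simulate weights}). Since $F(G,\xi)\in\mathbb{Q}\setminus\{0\}$ by the definition of a factor function and $F$ is polynomial-time computable, a single division recovers
\[
\pmult p(G)\;=\;\frac{p(T(G);\xi)}{F(G,\xi)}
\]
once $p(T(G);\xi)$ has been produced by the oracle for $\evalfix{\xi}{p}$. The nonvanishing hypothesis on $F$ is precisely what legitimises this division; nothing else is needed.

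For the size bound I would argue edge by edge, working from the construction of $T(G)$ in Definition~\ref{def: Weight simulation}. The $n$ original vertices of $G$ all persist in $T(G)$, since the attachment vertices of every inserted gadget are identified with their namesakes in $G$. Replacing a single edge $uv\in E(G)$ of weight $w_i$ then removes one edge and contributes $|V(H_i)|-2$ fresh vertices together with $|E(H_i)|$ new edges. The hypothesis that only weights $w_i$ with $i\leq t$ occur in $G$ lets me bound both of these quantities by $s=\max_{i\in[t]}|V(H_i)|+|E(H_i)|$, which is a constant depending only on $\mathcal{H}$ and $t$ (not on $G$). Summing over the $m$ edges of $G$ then yields at most $n+sm$ vertices and $sm$ edges in $T(G)$.

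I do not anticipate any real obstacle: the first part is a one-line consequence of the defining identity, and the second is a routine edge-by-edge count. The only subtlety worth flagging is the role of the bound $i\leq t$, which is exactly what makes $s$ finite and thus what will later allow this reduction to fit inside the sub-exponential reduction framework of Definition~\ref{def: subexponential-reductions}, once combined with Lemma~\ref{lem: Weighted block-wise interpolation}.
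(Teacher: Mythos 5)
Your proof is correct and takes exactly the route the paper intends: the paper gives no explicit proof and simply remarks that the lemma ``follows directly from Definition~\ref{def: Weight simulation},'' and your two paragraphs (divide by the nonzero factor $F(G,\xi)$; count $|V(H_i)|-2$ fresh vertices and $|E(H_i)|$ new edges per replaced edge, bounded by $s$ since $i\leq t$) are precisely the unpacking the paper leaves to the reader. Nothing is missing.
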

By combining Lemmas~\ref{lem: Weighted block-wise interpolation}
and \ref{lem: weight-reduction}, we obtain the wanted reduction from
$\coeff p$ to $\evalfix{\xi}p$ at fixed points $\xi\in\mathbb{Q}$
and finish the set-up of our framework.
\begin{thm}
\label{thm: Block interpolation} Let $p$ be subset-admissible and
let $\xi\in\mathbb{Q}$ be fixed. Assuming $\sharpETH$, there are
constants $\epsilon,C>0$ such that the problem $\evalfix{\xi}p$
admits no $O(2^{\epsilon n})$ time algorithm on unweighted graphs
with $n$ vertices and at most $Cn$ edges, provided that the following
two conditions hold:

\begin{description}
\item [{Coefficient~hardness:}] Assuming $\sharpETH$, there are constants
$\delta,D>0$ such that $\coeff p$ admits no $O(2^{\delta n})$ time
algorithm on unweighted graphs with $n$ vertices and at most $Dn$
edges.
\item [{Weight~simulation:}] There is a recursively enumerable sequence
of pairwise distinct weights $W=(w_{0},w_{1},\ldots)$, a sequence
of gadgets $\mathcal{H}=(H_{0},H_{1},\ldots)$, and a function
$F$ such that $(\mathcal{H},F)$ allows to reduce $\evalfix W{\pmult p}$
to $\evalfix{\xi}p$.
\end{description}
\end{thm}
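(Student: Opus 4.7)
The plan is to prove the sub-exponential reduction $\coeff p \leqSERF \evalfix \xi p$ by pipelining the two intermediate reductions already established in this section, after which the theorem is immediate. Concretely, I would compose the multivariate interpolation step of Lemma~\ref{lem: Weighted block-wise interpolation} with the gadget-based weight-simulation step of Lemma~\ref{lem: weight-reduction}, and then invoke Lemma~\ref{lem: subexponential reductions work} together with the coefficient-hardness hypothesis to contradict $\sharpETH$.

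In more detail, on input $(G,\epsilon)$ for $\coeff p$ with $n$ vertices and $O(n)$ edges, I would first feed $(G,\epsilon)$ into Lemma~\ref{lem: Weighted block-wise interpolation} using the sequence $W=(w_0,w_1,\ldots)$ supplied by the weight-simulation assumption. This computes the coefficients of $p(G)$ in time $2^{\epsilon n} n^{O(1)}$, issuing $2^{\epsilon n}$ oracle queries of the form $\vec p(G')$, where each $G'$ is an edge-weighted copy of $G$ with weights drawn from the fixed finite set $W_d=\{w_0,\ldots,w_d\}$ for some integer $d=d(\epsilon)$ depending only on~$\epsilon$. Each such query is then resolved by invoking Lemma~\ref{lem: weight-reduction}: build $T(G')$ from $G'$ by substituting, for each weight $w_i$, the corresponding gadget $H_i\in\mathcal{H}$; query the oracle for $\evalfix \xi p$ on the unweighted graph $T(G')$; and divide by the nonzero factor $F(G',\xi)$, as guaranteed by \eqref{eq: Simulate weights}.

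The only point that needs care is the size of the query graphs $T(G')$, which must be linear in $n$ up to a factor depending only on $\epsilon$. Because only the first $d+1$ gadgets from $\mathcal H$ ever appear, their vertex and edge counts are bounded by some constant $s=s(\epsilon)$ that is independent of $G$. Lemma~\ref{lem: weight-reduction} therefore yields $|V(T(G'))|+|E(T(G'))| = O(n+sm) = O(s(\epsilon)\cdot n)$, which matches clause~(2) of Definition~\ref{def: subexponential-reductions} with blow-up function $g(\epsilon)=O(s(\epsilon))$. Combining the two reductions gives $\coeff p \leqSERF \evalfix \xi p$; a hypothetical $2^{o(n)} n^{O(1)}$ algorithm for $\evalfix \xi p$ on sparse graphs would then, by Lemma~\ref{lem: subexponential reductions work}, yield a $2^{o(n)} n^{O(1)}$ algorithm for $\coeff p$ on sparse graphs, contradicting the coefficient-hardness hypothesis and hence $\sharpETH$. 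There is no real technical obstacle beyond this bookkeeping; the only pitfall is ensuring that the block size $d$, and therefore the gadget size $s(\epsilon)$, is a function of $\epsilon$ alone and not of the input, which is precisely what Lemma~\ref{lem: Weighted block-wise interpolation} already delivers.
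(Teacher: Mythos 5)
Your proposal is correct and follows exactly the same route as the paper's proof: pipeline Lemma~\ref{lem: Weighted block-wise interpolation} with Lemma~\ref{lem: weight-reduction}, track that only the first $d(\epsilon)+1$ gadgets appear so the query graphs have size $O(s(\epsilon)\cdot n)$, and conclude via Lemma~\ref{lem: subexponential reductions work} and coefficient hardness. Your write-up is slightly more explicit than the paper's about why the SERF size bound and the $F(G',\xi)$ division go through, but there is no substantive difference.
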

\begin{proof}
We present a sub-exponential reduction family from $\coeff p$ to
$\evalfix{\xi}p$. Given $\gamma>0$ and a graph $G$ with $n$ vertices
and $Dn$ edges, first apply Lemma~\ref{lem: Weighted block-wise interpolation}:
This way, we reduce $\coeff p$ to $O(2^{\gamma n})$ instances of
$\evalfix W{\pmult p}$ on graphs $G'$ that only use weights $w_{0},\ldots,w_{s}$
with $s=s(\gamma)$.

Since $(\mathcal{H},F)$ allows to reduce $\evalfix W{\pmult p}$
to $\evalfix{\xi}p$, we can invoke Lemma~\ref{lem: weight-reduction}
and reduce each instance $G'$ for $\evalfix W{\pmult p}$ to an instance
of $\evalfix{\xi}p$ on the graph $T(G')$. This graph features at
most $g(s)\cdot D\cdot n$ vertices and edges, where $g$ is a computable
function. Note that this second reduction runs in polynomial time
and also satisfies the requirements for a sub-exponential reduction.
Altogether, the theorem then follows from Corollary~\ref{cor: subexponential reductions work, contraposition}.
\end{proof}
Let us remark some corollaries of the reduction shown above.
\begin{rem}
\label{rem: Ensure bipartiteness}If the source instance $G$ for
$\coeff p$ has maximum degree $\Delta$, then the reduction images
$T(G')$ obtained above on parameter $\epsilon$ feature maximum degree
$c\Delta$ for a constant $c=c(\epsilon)$. By suitable choice of
$\mathcal{H}$, we can also ensure other properties on $T(G)$:

\begin{itemize}
\item If $G$ is bipartite and all edge-gadgets $(H,u,v)\in\mathcal{H}$
can be $2$-colored such that $u$ and $v$ receive different colors,
then $T(G')$ is bipartite as well. This can be verified, e.g., for
Example~\ref{exa: gadgets-perfmatch}.
\item If $G$ is planar and all edge-gadgets $(H,u,v)\in\mathcal{H}$ admit
planar drawings with $u$ and $v$ on their outer faces, then $T(G')$
is planar as well.
\end{itemize}
\end{rem}
To conclude this subsection, we list several possible generalizations
of Definition~\ref{def: Weight simulation} and Theorem~\ref{thm: Block interpolation}
that we chose not to include in our formulation.
\begin{enumerate}
\item We may extend Definition~\ref{def: Weight simulation} to incorporate
weight simulations that proceed non-locally, that is, in a less obvious
way than by inserting local gadgets at edges. This route was taken in \cite{DBLP:conf/iwpec/BrandDR16} since the conference version of this article was published.
\item In Lemma~\ref{lem: Weighted block-wise interpolation}, we do not
need to evaluate $\pmult p$ on a grid $W^{t}$ for a \emph{fixed}
coordinate set $W$. Instead, we could as well interpolate on a grid
$W_{1}\times\ldots\times W_{t}$, provided that each $W_{i}$ is large
enough and that the weights do not depend on the size of $G$.
\item We may also allow $2^{o(m)}$ time for the computation of the factor
function $F(G_{w},\xi)$. Rather than allowing only a multiplicative
factor, we could also allow an arbitrary function to be computed from
$p(T(G);\xi)$ and the input.
\end{enumerate}

\section{\label{sec:Applications}Applications of the Framework}

In the following subsections, we apply Theorem~\ref{thm: Block interpolation}
to obtain tight lower bounds for concrete counting problems, including the
unweighted permanent in Section~\ref{subsec:Unweighted-permanent},
the matching and independent set polynomials in Section~\ref{subsec:Matching-and-independent},
and the Tutte polynomial in Section~\ref{subsec:Tutte-polynomial}.

\subsection{\label{subsec:Unweighted-permanent}The Unweighted Permanent}

As mentioned in the introduction, it was shown in \cite{DBLP:journals/talg/DellHMTW14}
that, unless $\sharpETH$ fails, the problem $\evalfix{\{-1,1\}}{\perm}$
on graphs with $n$ vertices and $O(n)$ edges admits no algorithm
with running time $2^{o(n)}$. For convenience, we recall that this
is the problem of evaluating the permanent on graphs with edge-weights
$+1$ and $-1$. In the same paper, it was also shown that an algorithm
for the \emph{unweighted} permanent on such graphs would falsify $\rETH$,
the randomized version of $\ETH$. We improve upon this by showing
that it is sufficient to assume $\sharpETH$, which is a priori weaker
than $\ETH$ and also constitutes a more natural assumption for lower
bounds on counting problems.
\begin{thm*}
[Restatement of Theorem \ref{thm: Main PM}] Assuming $\sharpETH$,
there are constants $\epsilon,C>0$ such that the problem $\evalfix 1{\perm}$
of counting unweighted perfect matchings in bipartite graphs cannot
be solved in time $O(2^{\epsilon n})$ on graphs with $n$ vertices
and at most $Cn$ edges.
\end{thm*}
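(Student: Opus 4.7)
The plan is to reduce $\evalfix{\{-1,1\}}{\perm} \leqSERF \evalfix{1}{\perm}$ via block interpolation, propagating the known $\sharpETH$-based lower bound for the source problem from \cite{DBLP:journals/talg/DellHMTW14} to the target. Theorem~\ref{thm: Block interpolation} does not apply off-the-shelf, since for the polynomial $p$ of Example~\ref{exa: strange-p-becomes-perfmatch} (whose multivariate generalization is $\PerfMatch$), the coefficient hardness of $p$ coincides verbatim with $\evalfix{1}{\perm}$, i.e.\ with the very conclusion we wish to prove. Instead, I would mimic the proof of Lemma~\ref{lem: Weighted block-wise interpolation}, but interpolate only over the negative-weighted edges of the input.

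Concretely, given a sparse bipartite $G'$ with weights in $\{-1,+1\}$, let $E^{-}\subseteq E(G')$ denote its negative edges, partition $E^{-}$ into $t=\lceil|E^{-}|/d\rceil$ blocks $B_1,\ldots,B_t$ of size at most $d$ (where $d$ will be chosen as a function of $\epsilon$), and introduce block variables $y_1,\ldots,y_t$. Define
\[
\vec{q}(\vec{y})=\sum_{M\in\PM[G']}\prod_{i=1}^{t}y_i^{|M\cap B_i|},
\]
which has individual degree at most $d$ in each $y_i$. Observe that $\vec{q}(-1,\ldots,-1)=\PerfMatch(G')$ is the quantity we need.

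Now I would apply Lemma~\ref{lem: multivar-interpolation} to recover all coefficients of $\vec{q}$ from its evaluations on the grid $\Xi=\{1,\ldots,d+1\}^t$, incurring $(d+1)^t$ oracle queries. For $\xi\in\Xi$, the value $\vec{q}(\xi)$ is the permanent of $G'$ where each edge of $B_i$ carries weight $\xi_i\in\{1,\ldots,d+1\}$ and every other edge has weight $+1$. These positive-integer weights can be simulated on an unweighted bipartite graph using the gadget from Example~\ref{exa: gadgets-perfmatch}: replace each weight-$k$ edge $uv$ by $k$ internally vertex-disjoint $(u,v)$-paths of length three. This preserves bipartiteness (each path has odd length) and produces a sparse unweighted bipartite graph on $O(d\cdot|V(G')|)$ vertices. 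Choosing $d$ large enough that $\log_2(d+1)/d\le\epsilon$ and using $|E^{-}|=O(|V(G')|)$ by sparsity, the total number of queries is bounded by $(d+1)^t\le2^{\epsilon|V(G')|}$, yielding a sub-exponential reduction family in the sense of Definition~\ref{def: subexponential-reductions}.

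The main non-routine step is verifying that the $k$-path gadget simulates weight $k$ exactly for $\vec{p}=\PerfMatch$: in any perfect matching of the gadget-augmented graph, either $u$ and $v$ are matched outside the gadget (forcing each of the $k$ middle edges into the matching, contributing a factor $1$), or exactly one of the $k$ paths routes $u$ to $v$ through its three edges while the middle edges of the remaining $k-1$ paths complete the gadget matching (contributing the factor $k$). Via Lemma~\ref{lem: subexponential reductions work}, a hypothetical $2^{o(n)}$ algorithm for $\evalfix{1}{\perm}$ on sparse bipartite graphs would then yield one for $\evalfix{\{-1,1\}}{\perm}$, contradicting the $\sharpETH$-based lower bound of \cite{DBLP:journals/talg/DellHMTW14}.
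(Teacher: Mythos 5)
Your proof is correct and, modulo presentation, is the same reduction the paper uses. The paper \emph{does} invoke Theorem~\ref{thm: Block interpolation} off-the-shelf --- not with the polynomial from Example~\ref{exa: strange-p-becomes-perfmatch} (which, as you correctly note, would be circular), but with a different subset-admissible $p$: it takes $\chi(G,A)=[A\in\PM[G]]$ and $\omega(G,A)=A\cap E_{-1}(G)$, so that $p(G;-1)=\perm(G)$ (with $\pm 1$ weights) and $p(G;1)$ is the unweighted perfect-matching count. The multivariate generalization $\vec{p}$ then carries an indeterminate only at each negative edge, which is exactly your decision to ``interpolate only over the negative-weighted edges of the input,'' and the weight simulation condition is discharged by Example~\ref{exa: gadgets-perfmatch} with bipartiteness preserved via Remark~\ref{rem: Ensure bipartiteness}. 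So the step you carry out by hand (partitioning $E^-$ into blocks, degree-$d$ interpolation on $\{1,\dots,d+1\}^t$, simulating the positive integer weights by the $k$-parallel-path gadget) is precisely what Lemma~\ref{lem: Weighted block-wise interpolation} and Lemma~\ref{lem: weight-reduction} do internally once $p$ is chosen this way. Your gadget analysis and the accounting of query count and instance size are all correct; the only inaccuracy is the claim that the framework cannot be applied directly, which stems from trying the wrong choice of $p$ rather than from any real obstruction.
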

\begin{proof}
In the following, we invoke Theorem~\ref{thm: Block interpolation}
to show 
\[
\evalfix{\{-1,1\}}{\perm}\leqSERF\evalfix 1{\perm}.
\]

Let $G$ be a graph with edge-weights from $\{-1,1\}$ and let $E_{-1}(G)$
denote the set of edges with weight $-1$ in $G$. Define a sieve
function and weight selector 
\begin{eqnarray*}
\chi(G,A) & = & [A\in\PM[G]],\\
\omega(G,A) & = & A\cap E_{-1}(G),
\end{eqnarray*}
and observe that these induce a univariate graph polynomial $p=p_{\chi,\omega}$
with 
\[
p(G;-1)=\perm(G).
\]
Since knowing all coefficients of $p(G)$ clearly allows to evaluate
$p(G;-1)$, we obtain from \cite[Thm.~1.3]{DBLP:journals/talg/DellHMTW14}
that there are constants $\delta,D>0$ such that $\coeff p$ cannot
be solved in time $O(2^{\delta n})$ on $n$-vertex graphs with $Dn$
edges, unless $\sharpETH$ fails. Hence the coefficient hardness condition
of Theorem~\ref{thm: Block interpolation} is satisfied.\footnote{In fact, the authors of \cite{DBLP:journals/talg/DellHMTW14} state
their lower bound as ruling out $2^{o(n)}$ time algorithms for $\coeff p$
on graphs with $n$ vertices and $O(n)$ edges. This is however only
to simplify the presentation of their result. Their reduction from
counting satisfying assignments for $3$-CNFs to $\evalfix{\{-1,1\}}{\perm}$
is in fact a $\leqSERF$ reduction and hence also supports the stronger
claim needed for the coefficient hardness condition of Theorem~\ref{thm: Block interpolation}.}

To check the weight simulation condition, recall the pair $(\mathcal{H},F)$
from Example~\ref{exa: gadgets-perfmatch} that allows to reduce
$\evalfix{\mathbb{N}}{\pmult p}$ to $\evalfix 1p$. 
Hence all prerequisites for Theorem~\ref{thm: Block interpolation} are fulfilled,
so counting perfect matchings in unweighted graphs has the desired lower bound.
By Remark~\ref{rem: Ensure bipartiteness},
the reduction images $T(G)$ constructed by the theorem
are bipartite as well. This proves the theorem.
\end{proof}
We collect a series of corollaries for other counting problems from
this theorem. Let $L(G)$ denote the \emph{line graph} of a graph
$G=(V,E)$: This graph has vertex set $E$, and $e,e'\in E$ are defined
to be adjacent in $L(G)$ iff $e\cap e'\neq\emptyset$. A graph is a
line graph if it is the line graph of some graph.
\begin{cor}
\label{cor: Lower bounds from permanent}Assuming $\sharpETH$, there
is a constant $\epsilon>0$ for each of the following problems such
that no $O(2^{\epsilon n})$ time algorithm solves the problem:

\begin{enumerate}
\item $\evalfix 1{\perm}$, that is, counting perfect matchings in bipartite
graphs, in graphs with $n$ vertices and maximum degree $3$.
\item Counting maximum independent sets (or minimum vertex covers), even
in line graphs with $n$ vertices and maximum degree $4$.
\item Counting minimum-weight satisfying assignments to monotone $2$-CNF
formulas on $n$ variables, even if every variable appears in at most
$4$ clauses.
\end{enumerate}
\end{cor}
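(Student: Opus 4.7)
Each of the three statements follows by a direct reduction from Theorem~\ref{thm: Main PM}, and the reductions compose in a chain: Part~(1) from Theorem~\ref{thm: Main PM}, Part~(2) from Part~(1), Part~(3) from Part~(2). The overall budget is polynomial time with linear blow-up, so the $2^{\Omega(n)}$ lower bound propagates down the chain.

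For Part~(1) the plan is a polynomial-time, linear-size reduction from counting perfect matchings in sparse bipartite graphs to the same problem on graphs of maximum degree~$3$. I would replace each vertex $v$ of degree $d>3$, with neighbors $u_1,\ldots,u_d$, by a zig-zag path gadget: introduce fresh vertices $v_1,\ldots,v_d$ on the side of $v$ and $w_1,\ldots,w_{d-1}$ on the opposite side, add the path edges $v_iw_i$ and $w_iv_{i+1}$, and re-route each original edge $u_iv$ to $u_iv_i$. A short induction on $d$ shows that every perfect matching of the gadget matches exactly one $v_k$ to $u_k$ and the remaining $2d-2$ internal vertices uniquely along the two sub-chains; hence the global perfect-matching count is preserved exactly. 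The gadget has maximum degree~$3$, preserves bipartiteness, and introduces $\mathcal{O}(d)$ new vertices, so applying it simultaneously at every high-degree vertex blows the graph up to $\mathcal{O}(n+m)=\mathcal{O}(n)$ vertices. A $2^{o(n)}$ algorithm on the image would therefore refute Theorem~\ref{thm: Main PM}.

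For Part~(2) I would use the size-preserving bijection between matchings of $G$ and independent sets of $L(G)$ (matchings are edge-subsets, edges become vertices of $L(G)$). Given a max-degree-$3$ bipartite $G$, first test in polynomial time whether $G$ admits a perfect matching; if not the count is $0$. Otherwise the maximum-matching size equals $|V(G)|/2$, and the maximum independent sets of $L(G)$ are precisely the perfect matchings of $G$. Every vertex of $L(G)$ has degree at most $2(\Delta(G)-1)=4$, and $L(G)$ has $m=\mathcal{O}(n)$ vertices, so a $2^{o(m)}$ algorithm for counting maximum independent sets in max-degree-$4$ line graphs would give a $2^{o(n)}$ algorithm for Part~(1). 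Since minimum vertex covers are complements of maximum independent sets, they are counted by the same quantity. Part~(3) is then a translation of vertex cover into monotone $2$-SAT: introduce a Boolean variable $x_u$ for each vertex $u$ of the input graph $H$ and a clause $(x_u\vee x_v)$ for each edge $uv$; satisfying assignments of Hamming weight $w$ correspond bijectively to vertex covers of size $w$. Plugging $H=L(G)$ from Part~(2) yields a monotone $2$-CNF on $m$ variables in which every variable appears in at most $4$ clauses, completing the reduction.

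The main obstacle is the degree-reduction gadget of Part~(1): the parity/length bookkeeping on the zig-zag chain has to be arranged so that exactly one escape to some $u_k$ is forced and that the two sub-chains on either side of $v_k$ both have a unique perfect matching. Everything else is a standard combinatorial identification, and the asymptotic bookkeeping $m=\mathcal{O}(n)$ makes the lower bounds transfer with no loss.
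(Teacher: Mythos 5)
Your proposal is correct and, for Parts (2) and (3), takes exactly the same route as the paper: the bijection between perfect matchings of $G$ and maximum independent sets of the line graph $L(G)$ (with degree bound $2(\Delta-1)=4$), then the standard vertex-cover-to-monotone-$2$-CNF translation. The only place you diverge is Part (1). The paper simply cites Dagum--Luby \cite[Theorem~6.2]{DBLP:journals/tcs/DagumL92} for a parsimonious, linear-size degree-reduction for the permanent, whereas you supply a self-contained path gadget. Your gadget is sound: the internal path $v_1-w_1-v_2-\cdots-w_{d-1}-v_d$ has an odd number of vertices ($2d-1$), forcing at least one escape to some $u_k$; two escapes would isolate an odd middle segment $w_j,v_{j+1},\ldots,w_{k-1}$ of size $2(k-j)-1$, so exactly one escape occurs, and each of the two remaining even sub-paths has a unique perfect matching. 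The gadget keeps degree at most $3$, preserves bipartiteness (place the $v_i$ on $v$'s side, the $w_i$ on the other), and contributes $\sum_v(2\deg v-1)=4m-n=\mathcal{O}(n)$ vertices. So the two proofs of Part (1) are logically interchangeable; yours is more self-contained, while the paper's is shorter by deferring to a known construction.
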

\begin{proof}
For the first statement, we use a known reduction from the permanent on
general bipartite graphs to bipartite graphs of maximum degree $3$,
shown in \cite[Theorem 6.2]{DBLP:journals/tcs/DagumL92}. This reduction
maps graphs with $n$ vertices and $m$ edges to graphs with $O(n+m)$
vertices and edges.

For the second statement, if $G$ has $m$ edges and maximum degree
$\Delta=\Delta(G)$, then $L(G)$ has $m$ vertices and maximum degree
$2(\Delta-1)$. The set $\PM[G]$ corresponds bijectively to the independent
sets of size $n/2$ in $L(G)$, which are the maximum independent
sets in $L(G)$, unless $G$ has no perfect matching, which we can
test efficiently. The maximum independent sets in turn stand in bijection
with the minimum vertex covers of $L(G)$ via complementation. We
thus obtain the statement by reduction from $\evalfix 1{\perm}$ on
graphs of maximum degree $3$.

For the third statement, observe that the minimum vertex covers of
a graph $H=(V,E)$ correspond bijectively to the minimum-weight satisfying
assignments of an associated monotone $2$-CNF formula $\varphi$:
To obtain this formula, create a variable $x_{v}$ for each $v\in V$ and a clause $(x_{u}\vee x_{v})$
for each $uv\in E$. This is a standard reduction, noted also in \cite[Proposition 2.1]{DBLP:journals/siamcomp/Vadhan01}.
\end{proof}

\subsection{\label{subsec:Matching-and-independent}The Matching and Independent
Set Polynomials}

We prove Theorem~\ref{thm: Main Match}, a tight lower bound for the evaluation problem of the matching polynomial $\evalfix{\xi}{\mu}$ at fixed $\xi\in\mathbb{Q}$, by invoking Theorem~\ref{thm: Block interpolation}.
As described in the introduction, the perfect matchings of $G$ are
counted by the coefficient of $x^{0}$ in $\mu(G)$, so $\coeff{\mu}$
and $\evalfix 0{\mu}$ have the same lower bound as $\evalfix 1{\perm}$.
This settles the requirement of coefficient hardness in Theorem~\ref{thm: Block interpolation}.
In the following, we analyze the example for an interpolation-based reduction from the introduction (where we reduced counting perfect matchings to counting matchings)
to show that $\mu$ also admits weight simulation.
\begin{lem}
\label{lem: mu MatchSum weights}Let $\mathcal{H}=(H_{i})_{i\in\mathbb{N}}$
be the following family of vertex-gadgets, where $H_{i}$ contains
one attachment vertex $v$, adjacent to an independent set of $i$
vertices. 

\begin{center}
\includegraphics[width=0.18\textwidth]{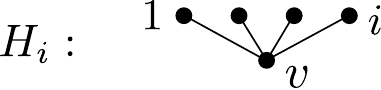}
\par\end{center}

Consider $\xi\in\mathbb{Q}\setminus\{0\}$ to be fixed. Let $W=(w_{t})_{t\in\mathbb{N}}$
with $w_{t}=1+\frac{t}{\xi^{2}}$ for $t\in\mathbb{N}$. Given a graph
$G$ with vertex-weights from $W$, let $a_{t}$ for $t\in\mathbb{N}$
denote the number of vertices of weight $w_{t}$ in $G$. We define
\[
F(G,\xi)=\prod_{t\in\mathbb{N}}\xi^{t\cdot a_{t}}.
\]
Then $(\mathcal{H},F)$ allows to reduce $\evalfix W{\pmult{\mu}}$
to $\evalfix{\xi}{\mu}$.
\end{lem}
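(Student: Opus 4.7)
The plan is to verify the required identity $\vec{\mu}(G)=\mu(T(G);\xi)/F(G,\xi)$ from Definition~\ref{def: Weight simulation} (in its vertex-weighted form, per Remark~\ref{rem:vertex-weights}) by a direct combinatorial expansion of the right-hand side.

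First I would unpack the multivariate polynomial: by Example~\ref{exa: tutte-reformulated}, $\mu$ is induced by $\chi(G,A)=[A\in\mathcal{M}[G]]$ and $\omega(G,A)=\usat(G,A)$, so its multivariate generalization is
\[
\vec{\mu}(G;\mathbf{x})=\sum_{M\in\mathcal{M}[G]}\prod_{v\in\usat(G,M)}x_{v},
\]
and evaluating on the vertex-weighted $G$ yields $\vec{\mu}(G)=\sum_{M}\prod_{v\in\usat(G,M)}w_{t(v)}$, where $t(v)$ denotes the index with $w_{t(v)}$ being the weight of $v$.

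Next, I would decompose matchings of $T(G)$ gadget-by-gadget. Because the gadgets $H_{t(v)}$ are pairwise vertex-disjoint and meet the rest of $T(G)$ only at $v$, every matching $M'\in\mathcal{M}[T(G)]$ splits uniquely as $M\cup L$, where $M=M'\cap E(G)$ is a matching of $G$ and $L$ selects at most one pendant gadget-edge at each $v$, subject to the constraint that $L$ may contain an edge at $v$ only when $v$ is unmatched in $M$. Fixing $M$ and summing over $L$, the gadget at $v$ contributes independently: if $v$ is matched in $M$, all $t(v)$ leaves stay uncovered and the local factor is $\xi^{t(v)}$; if $v$ is unmatched in $M$, the options are to leave $v$ unmatched (contributing $\xi^{t(v)+1}$) or to match $v$ to one of $t(v)$ leaves (each contributing $\xi^{t(v)-1}$), giving the local factor
\[
\xi^{t(v)+1}+t(v)\,\xi^{t(v)-1}=\xi^{t(v)-1}\bigl(\xi^{2}+t(v)\bigr)=\xi^{t(v)+1}w_{t(v)},
\]
where the last rewriting uses $\xi\neq 0$ and is precisely the identity that makes the weight $w_{t(v)}$ surface.

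The final step is to collect the per-vertex factors across all $v\in V(G)$, extract the part depending only on the weight multiset of $G$ (to be identified with $F(G,\xi)=\prod_{t}\xi^{t\cdot a_{t}}$), and match the remainder against $\vec{\mu}(G)$. I expect the main obstacle to lie exactly here: the matched contribution $\xi^{t(v)}$ and unmatched contribution $\xi^{t(v)+1}w_{t(v)}$ must be normalized compatibly so that the dependence on $M$ collapses cleanly, leaving the coefficient of $\prod_{v\in\usat(G,M)}w_{t(v)}$ constant across all $M$. One should also verify the degenerate case $t(v)=0$, where $H_{0}$ reduces to the single vertex $v$, the unmatched-contribution formula reduces to $\xi$, and $w_{0}=1$ is consistent; and note that the $\xi\neq 0$ hypothesis is used precisely to legitimize the negative powers of $\xi$ that appear in the factoring step. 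Once the identity (\ref{eq: Simulate weights}) is verified, the conditions of Definition~\ref{def: Weight simulation} are met, and the reduction $\evalfix{W}{\vec{\mu}}\leqSERF\evalfix{\xi}{\mu}$ follows.
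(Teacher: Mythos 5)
Your combinatorial decomposition is exactly the one the paper uses: split a matching of $T(G)$ into its restriction $M$ to $E(G)$ plus a pendant-edge selection $L$, and factor the contribution vertex by vertex. Where you part ways is in the local-factor computation, and you are the one who is right. The paper's proof writes the local factor at an unmatched $v$ of weight $w_t$ as $\xi^{t}+t\xi^{t-2}=\xi^{t}w_{t}$, but as you compute, not extending leaves $v$ \emph{and} all $t$ leaves uncovered (exponent $t+1$), while extending by one pendant edge leaves $t-1$ leaves uncovered (exponent $t-1$), so the correct factor is $\xi^{t+1}+t\xi^{t-1}=\xi^{t+1}w_{t}$, i.e.\ the paper's expression times $\xi$. (Your degenerate check at $t=0$ already exposes this: the paper's formula gives $1$, the truth is $\xi$.)

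The obstacle you flag at the end is therefore genuine and not something to hand-wave past. Pulling out $\prod_{v}\xi^{t(v)}=F(G,\xi)$, the remainder is
\[
\sum_{M\in\mathcal{M}[G]}\ \prod_{v\in\usat(G,M)}\bigl(\xi\cdot w_{t(v)}\bigr)
\;=\;
\sum_{M\in\mathcal{M}[G]}\ \xi^{|\usat(G,M)|}\prod_{v\in\usat(G,M)}w_{t(v)},
\]
which is \emph{not} $\vec{\mu}(G)$ unless $\xi\in\{1,-1\}$: the residual $\xi^{|\usat(G,M)|}$ depends on $M$ and cannot be absorbed into a factor function $F$ that only sees the weight multiset. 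So the identity~(\ref{eq: Simulate weights}) fails with the lemma's stated $W$ and $F$, and your proof attempt cannot close because the claim as written does not hold.

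The repair is small and worth spelling out so that your argument actually terminates: replace $w_{t}=1+t/\xi^{2}$ by $w_{t}=\xi+t/\xi=\xi\,(1+t/\xi^{2})$, keeping $F(G,\xi)=\prod_{t}\xi^{t\cdot a_{t}}$ as is. These weights are still pairwise distinct and recursively enumerable for every fixed $\xi\neq 0$, so the hypotheses of Theorem~\ref{thm: Block interpolation} are met, and with them your per-vertex factors become $\xi^{t(v)}$ (matched) and $\xi^{t(v)}w_{t(v)}$ (unmatched), which do collapse cleanly to $F(G,\xi)\cdot\vec{\mu}(G)$. Note that the downstream results (Theorem~\ref{thm: Main Match} and its corollaries) are unaffected, since all that matters there is the existence of \emph{some} infinite distinct weight sequence, not these particular values. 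In short: your approach and computation are correct and more careful than the paper's; you should not try to force a match with the stated $W$, but instead observe that the weight sequence needs the extra factor of $\xi$, after which the proof is complete.
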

\begin{proof}
Recall the graph transformation $T(G)$ for vertex-weighted graphs
from Definition~\ref{def: Weight simulation} and Remark~\ref{rem:vertex-weights}:
At every vertex of weight $w_{t}$, for $t\in\mathbb{N}$, we attach
the gadget $H_{t}$. To show that $(\mathcal{H},F)$ indeed satisfies
Definition~\ref{def: Weight simulation}, we need to show that 
\begin{equation}
\mu(T(G),\xi)=F(G,\xi)\cdot\pmult{\mu}(G).\label{eq: mu =00003D MatchSum}
\end{equation}

To see this, observe that every matching $M$ in $G$ can be extended
locally at vertices $v\in V(G)$ by edges of vertex-gadgets to obtain
a matching in $T(G)$. Let $v\in V(G)$ be a vertex of weight $w_{t}$
for $t\in\mathbb{N}$. If $v\notin\usat(G,M)$, then $M$ cannot be
extended at the vertex $v$, and $v$ incurs the factor $\xi^{t}$
in $\mu(T(G))$. If $v\in\usat(G,M)$, then we can choose not to extend
$v$, or we may choose to extend by one of the $t$ edges of $H_{t}$,
so we obtain a factor of 
\[
\xi^{t}+t\xi^{t-2}=\xi^{t}(1+\frac{t}{\xi^{2}}).
\]
at the vertex $v$. This establishes (\ref{eq: mu =00003D MatchSum}), and hence
the lemma.
\end{proof}
The desired theorem follows.
\begin{thm*}
[Restatement of Theorem \ref{thm: Main Match}] If $\sharpETH$ holds,
then for each $\xi\in\mathbb{Q}$, there are constants $\epsilon,C>0$
such that $\evalfix{\xi}{\mu}$ cannot be solved in time $O(2^{\epsilon n})$
on graphs with $n$ vertices and maximum degree $C$. With $\xi=1$,
this holds especially for $\evalfix 1{\mu}$, which amounts to counting
matchings.
\end{thm*}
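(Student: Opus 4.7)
The plan is to apply Theorem~\ref{thm: Block interpolation} to the matching polynomial $\mu$, which Example~\ref{exa: tutte-reformulated} already identifies as subset-admissible. I would split into two cases based on $\xi$. The case $\xi = 0$ is immediate: $\mu(G;0) = m_0$ is exactly the number of perfect matchings in $G$, so a $2^{o(n)}$ algorithm for $\evalfix{0}{\mu}$ on bounded-degree graphs would contradict Theorem~\ref{thm: Main PM} combined with Corollary~\ref{cor: Lower bounds from permanent}(1), which yields the same lower bound for counting perfect matchings on graphs of maximum degree $3$.

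For fixed $\xi \in \mathbb{Q} \setminus \{0\}$, I would verify the two hypotheses of Theorem~\ref{thm: Block interpolation}. \emph{Coefficient hardness} follows from the same observation used above: the coefficient of $x^0$ in $\mu(G;x)$ is $m_0$, so $\coeff{\mu}$ is at least as hard as counting perfect matchings, giving a $2^{\Omega(n)}$ lower bound on bounded-degree inputs under $\sharpETH$. \emph{Weight simulation} is precisely the content of Lemma~\ref{lem: mu MatchSum weights}: the sequence $W = (w_t)_{t\in\mathbb{N}}$ with $w_t = 1 + t/\xi^2$ is recursively enumerable and consists of pairwise distinct rationals (using $\xi \neq 0$), and the pair $(\mathcal{H}, F)$ given there reduces $\evalfix{W}{\vec{\mu}}$ to $\evalfix{\xi}{\mu}$.

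The one small adaptation is that Lemma~\ref{lem: mu MatchSum weights} uses vertex-weights and vertex-gadgets, so the reductions in Theorem~\ref{thm: Block interpolation} and Lemma~\ref{lem: Weighted block-wise interpolation} must be read with the natural vertex-weighted analog indicated by Remark~\ref{rem:vertex-weights}; the block-interpolation argument is unchanged since it only partitions the set of indeterminates of $\vec{\mu}$, which in this case are indexed by vertices rather than edges.

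Finally I would check that the bounded-degree condition is preserved. By Lemma~\ref{lem: Weighted block-wise interpolation}, for each runtime parameter $\epsilon > 0$ only weights from a finite set $W_d = \{w_0, \ldots, w_d\}$ with $d = d(\epsilon)$ appear, so only the gadgets $H_0, \ldots, H_d$ are attached; each $H_t$ adds at most $t \leq d$ new neighbors to its attachment vertex. Thus an input graph of maximum degree $\Delta$ is mapped to a graph of maximum degree $\Delta + d = \O_{\epsilon}(1)$, so the whole procedure is a sub-exponential reduction family producing bounded-degree graphs, as required. I do not foresee any serious obstacle; all heavy lifting is done by Theorem~\ref{thm: Block interpolation} and Lemma~\ref{lem: mu MatchSum weights}, and the only remaining care is the vertex-weight adaptation and the (constant) degree accounting.
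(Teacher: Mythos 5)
Your proposal is correct and follows essentially the same route as the paper: coefficient hardness via the $x^0$ coefficient of $\mu$ and Corollary~\ref{cor: Lower bounds from permanent}, weight simulation via Lemma~\ref{lem: mu MatchSum weights}, and degree control via the constant gadget bound (the paper cites Remark~\ref{rem: Ensure bipartiteness} for this). Your explicit treatment of the $\xi=0$ case and of the vertex-weight variant of Lemma~\ref{lem: Weighted block-wise interpolation} merely spells out points the paper addresses just before the theorem and via Remark~\ref{rem:vertex-weights}, respectively.
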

\begin{proof}
By Corollary~\ref{cor: Lower bounds from permanent}, there is a
constant $\delta>0$ such that $\coeff{\mu}$ cannot be solved in
time $O(2^{\delta n})$ unless $\sharpETH$ fails, even on graphs
with maximum degree $3$. This settles the requirement of coefficient
hardness in Theorem~\ref{thm: Block interpolation}, even on graphs
of maximum degree $3$.

In Lemma~\ref{lem: mu MatchSum weights}, we have seen that $\mu$
admits weight simulations. By Remark~\ref{rem: Ensure bipartiteness}
and the reduction from $\coeff{\mu}$ on graphs of maximum degree
$3$, the queries issued by Theorem~\ref{thm: Block interpolation}
have maximum degree $O(1)$, which implies the degree bound in the
theorem.
\end{proof}
\begin{rem}
\label{rem: match eth also complex}For later use, note that
the same proof yields the same lower bound for $\mu(G;\xi)$ even
when $\xi\in\mathbb{C}$ is a complex number with $\xi=\sqrt{c}$
for $c\in\mathbb{Q}$. To this end, note that we may assume that $G$
has an even number of vertices (for instance, by adding an isolated
vertex and dividing $\mu(G;\xi)$ by $\xi$). Then we can compute
$\mu(G;\xi)$ over the rational numbers: Every matching in $G$ has
an even number of unmatched vertices, and thus only even powers of
$\xi$ appear in $\mu(G;\xi)$.
\end{rem}
As in Corollary~\ref{cor: Lower bounds from permanent}, we can easily
obtain corollaries for the independent set polynomial $I$ and for counting
monotone $2$-SAT, improving upon \cite{DBLP:conf/iwpec/Hoffmann10,DBLP:journals/talg/DellHMTW14}. 
\begin{cor}
Assuming $\sharpETH$, there is a constant $\epsilon>0$ for each
of the following problems such that no $O(2^{\epsilon n})$ time algorithm
solves the problem:

\begin{enumerate}
\item $\evalfix{\xi}I$ on line graphs of maximum degree $O(1)$, for $\xi\in\mathbb{Q}\setminus\{0\}$,
especially at $\xi=1$, which amounts to counting independent sets
(or vertex covers).
\item Counting satisfying assignments to monotone 2-CNF formulas, even if
every variable appears in at most $O(1)$ clauses.
\end{enumerate}
\end{cor}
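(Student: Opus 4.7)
\emph{Part (1).} My plan is to reduce from the hardness of the matching polynomial (Theorem~\ref{thm: Main Match} together with Remark~\ref{rem: match eth also complex}) to the evaluation of the independent set polynomial on line graphs. The workhorse is the well-known bijection between matchings of $G$ and independent sets of $L(G)$, which, combined with $|\usat(G,M)| = |V(G)| - 2|M|$, yields
\[
I(L(G);\xi) \;=\; \sum_{M\in\M[G]} \xi^{|M|} \;=\; \xi^{n/2}\cdot\mu\bigl(G;\,1/\sqrt{\xi}\bigr),
\]
where $n=|V(G)|$. For any fixed $\xi\in\mathbb{Q}\setminus\{0\}$, the value $\alpha:=1/\sqrt{\xi}$ satisfies $\alpha^2 = 1/\xi\in\mathbb{Q}$, so $\alpha$ is of the form $\sqrt{c}$ with $c\in\mathbb{Q}$. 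By Remark~\ref{rem: match eth also complex}, computing $\mu(G;\alpha)$ admits no $2^{o(n)}$ algorithm on sparse bounded-degree graphs $G$ with even $n$, and the even-$n$ assumption is cost-free (pad by an isolated vertex if necessary, which leaves $L(G)$ unchanged and multiplies $\mu$ by $\alpha$). Because $n$ is even, $\xi^{-n/2}\in\mathbb{Q}$, so an oracle call to $\evalfix{\xi}{I}$ on $L(G)$ together with one rational division recovers $\mu(G;\alpha) = \xi^{-n/2}\cdot I(L(G);\xi)$. Since $L(G)$ has $|E(G)| = \mathcal{O}(n)$ vertices and maximum degree $2(\Delta(G)-1)=\mathcal{O}(1)$, the stated lower bound on line graphs of bounded degree follows.

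\emph{Part (2).} This reduces directly to part~(1) at $\xi=1$ via the standard construction already used in Corollary~\ref{cor: Lower bounds from permanent}: for any graph $H=(V,E)$, the monotone $2$-CNF $\varphi_H$ with a variable $x_v$ for each $v\in V$ and a clause $(x_u\vee x_v)$ for each edge $uv\in E$ has its satisfying assignments in bijection with the vertex covers of $H$, and hence (by complementation) with its independent sets. Thus the number of satisfying assignments of $\varphi_H$ equals $I(H;1)$. Taking $H$ to be a hard line graph from part~(1), the formula $\varphi_H$ has $|V(H)|$ variables, each appearing in $\deg_H(v)=\mathcal{O}(1)$ clauses, giving the claimed lower bound.

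The only nontrivial step is the appeal to Remark~\ref{rem: match eth also complex} at $\alpha=1/\sqrt{\xi}$, which may be irrational or complex when $\xi$ is not a positive rational square. The remark is, however, tailored precisely for this regime, so this turns into routine bookkeeping rather than a real obstacle.
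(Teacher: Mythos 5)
Your proposal is correct and follows essentially the same route as the paper: both use the identity relating $\mu(G;\cdot)$ and $I(L(G);\cdot)$ via $|\usat(M)| = n - 2|M|$, reduce from Theorem~\ref{thm: Main Match} together with Remark~\ref{rem: match eth also complex} to handle the possibly-irrational point $\sqrt{\xi^{-1}}$, and obtain part~(2) by the standard vertex-cover/monotone-$2$-CNF correspondence. Your extra note on padding to even $n$ is already subsumed by Remark~\ref{rem: match eth also complex}, which assumes an even vertex count, so this is just redundant bookkeeping.
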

\begin{proof}
We first prove the first statement: If $G$ has $n$ vertices and $m$ edges,
and satisfies $\Delta(G)=O(1)$, then $L(G)$ has $m$ vertices and
$\Delta(L(G))=O(1)$. For every matching $M\in\M[G]$, we have
\[
2|M|+|\usat(M)|=n,
\]
and since matchings of $G$ stand in bijection with independent sets of $L(G)$,
\[
\mu(G;\xi)=\sum_{M\in\M[G]}\xi^{|\usat(M)|}=\xi^{n}\cdot\sum_{M\in\M[G]}\xi^{-2|M|}=\xi^{n}\cdot I(L(G);\xi^{-2}).
\]

Hence, for fixed $\xi\neq0$, an algorithm for $\evalfix{\xi}I$ on
line graphs implies one for $\evalfix{\xi'}{\mu}$ on general graphs
with $\xi'=\sqrt{\xi^{-1}}$, but this is ruled out by Theorem~\ref{thm: Main Match}
and Remark~\ref{rem: match eth also complex}.

For the second statement, recall that independent sets and vertex
covers stand in bijection. We then reduce from counting vertex covers
as in Corollary~\ref{cor: Lower bounds from permanent}. 
\end{proof}

\subsection{\label{subsec:Tutte-polynomial}The Tutte Polynomial}

To apply the block-interpolation framework to the Tutte polynomial,
we use univariate restrictions of $Z$, as discussed in Example~\ref{exa: tutte-reformulated}.
Let $Z_{q}=Z(q,\cdot)$ for fixed $q\in\mathbb{Q}\setminus\{0\}$.
As in \cite{DBLP:journals/talg/DellHMTW14}, for $q=0$, we instead
consider the polynomial 
\[
Z_{0}(G;w)=\sum_{A\subseteq E(G)}0^{k(G,A)-k(G,E)}w^{|A|}.
\]
Note that terms corresponding to $A\subseteq E(G)$ with $k(G,A)\neq k(G,E)$
vanish in $Z_{0}(G;w)$. We will use $Z_{0}$ to prove lower bounds
for the Tutte polynomial on the line defined by $x=1$: If $(x,y)\in\mathbb{Q}^{2}$
satisfies $x=1$, we can write $q=0$ and $w=y-1$ and obtain 
\begin{eqnarray}
T(G;x,y) & = & (x-1)^{-k(G,E)}\cdot(y-1)^{-|V(G)|}\cdot Z(G;q,w)\nonumber \\
 & = & w^{-|V(G)|}\cdot Z_{0}(G;w).\label{eq: Z to Z0}
\end{eqnarray}

Using Theorem~\ref{thm: Block interpolation}, we then prove lower
bounds for $\evalfix w{Z_{q}}$ at fixed $q,w\in\mathbb{Q}$. As in
the previous examples, we first require a lower bound for $\coeff{Z_{q}}$,
which we adapt from \cite{DBLP:journals/talg/DellHMTW14}.
\begin{lem}
\label{lem: tutte-coeff}\cite[Propositions~4.1~and~4.3]{DBLP:journals/talg/DellHMTW14}
Assuming $\sharpETH$, for each $q\in\mathbb{Q}\setminus\{1\}$, there
are constants $\epsilon,C>0$ such that the problem $\coeff{Z_{q}}$
cannot be solved in time $O(2^{\epsilon n})$ on $n$-vertex graphs
with $Cn$ edges.\footnote{In \cite{DBLP:journals/talg/DellHMTW14}, this result is stated as
$\coeff{Z_{q}}$ not having $2^{o(m)}$ time algorithms for $q\in\mathbb{Q}\setminus\{1\}$.
However, the paper actually shows the slightly stronger claim in the
statement of the lemma.}
\end{lem}
\begin{rem}
In fact, we could also use block interpolation to simplify this result
from \cite{DBLP:journals/talg/DellHMTW14} by performing an interpolation
step that needed to be circumvented by the authors with some tricks.
However, since Lemma~\ref{lem: tutte-coeff} was already shown in
\cite{DBLP:journals/talg/DellHMTW14}, we omit the self-contained
proof that would still require some arguments which are very specific
to the Tutte polynomial.
\end{rem}
Note that the case $q=1$ is left uncovered by this lemma, and we
consequently cannot prove lower bounds at $q=1$, where $\coeff{Z_{1}}$
in fact becomes polynomial-time solvable. 

In \cite{DBLP:journals/talg/DellHMTW14}, the problem $\coeff{Z_{q}}$
with $q\neq1$ is reduced to unweighted evaluation via \emph{Theta
graphs} and \emph{wumps}, families of edge-gadgets that incur only
$O(\log^{c}n)$ blowup. This economical (but still\emph{ }not constant)
factor however requires a quite involved analysis. Using block interpolation,
we can instead use mere paths, and hence perform \emph{stretching},
a classical weight simulation technique for the Tutte polynomial \cite{JVW90,DBLP:journals/talg/DellHMTW14,DBLP:journals/siamcomp/GoldbergJ14}.
In the following, please recall $\pmult Z_{q}$, as defined by Example~\ref{exa: tutte-reformulated}
and (\ref{eq: multivar-p}).
\begin{lem}
\label{lem: tutte-weights}For $k\in\mathbb{N}$, let $P_{k}$ denote
the path on $k$ edges with distinguished start/end vertices $u,v\in V(P_{k})$
and let $\mathcal{P}=(P_{1},P_{2},\ldots)$. Let $w,q\in\mathbb{Q}$
be fixed with $w\neq0$ and $q\notin\{1,-w,-2w\}$. Then there is
an infinite recursively enumerable sequence of pairwise distinct weights
$W$ and a factor function $F$ such that $(\mathcal{P},F)$ allows
to reduce $\evalfix W{\pmult Z_{q}}$ to $\evalfix w{Z_{q}}$.
\end{lem}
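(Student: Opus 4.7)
I would use the classical \emph{stretching} identity for the multivariate Tutte polynomial $\vec{Z_q}$: inserting a path of $k$ edges each of weight $w$ in place of a single edge of weight $w'$ corresponds, up to an explicit multiplicative factor, to evaluating $\vec{Z_q}$ on the original graph with that edge's weight reset to an \emph{effective weight} $w_k^{\star}$ depending only on $q$, $w$, and $k$. The sequence $W$ is then the list of these effective weights for $k\geq 1$, matched with $\mathcal{P}$ so that $P_k$ simulates $w_k^{\star}$, and $F$ is the product of the associated multiplicative factors over all weighted edges.

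To compute $w_k^{\star}$, I would fix a graph $G$ with a distinguished edge $e=uv$ of weight $w'$ and let $G'$ be obtained by deleting $e$ and inserting a fresh $P_k$ between $u$ and $v$ with each new edge of weight $w$. Expanding $\vec{Z_q}(G')$ as a double sum over $A\cup B$ with $A\subseteq E(G)\setminus\{e\}$ and $B\subseteq E(P_k)$, and exploiting that $P_k$ is a tree (so a subset $B$ of size $j$ induces $k+1-j$ components on $V(P_k)$, of which precisely two contain $u$ or $v$ when $B\neq E(P_k)$), a case split on $B=E(P_k)$ versus $B\subsetneq E(P_k)$ yields
\[
\vec{Z_q}(G') \;=\; w^k\,\vec{Z_q}(G/e) \;+\; \beta_k\,\vec{Z_q}(G\setminus e), \qquad \beta_k := \frac{(q+w)^k-w^k}{q}.
\]
Comparing with deletion--contraction $\vec{Z_q}(G)=\vec{Z_q}(G\setminus e)+w'\,\vec{Z_q}(G/e)$ forces $w_k^{\star}:=w^k/\beta_k=qw^k/((q+w)^k-w^k)$, yielding $\vec{Z_q}(G')=\beta_k\,\vec{Z_q}(G)$. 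Iterating this single-edge identity over all weighted edges of the input gives $Z_q(T(G);w)=F(G,w)\,\vec{Z_q}(G)$ with $F(G,w):=\prod_{k\geq 1}\beta_k^{m_k(G)}$, where $m_k(G)$ counts edges of weight $w_k^{\star}$ in $G$; this is polynomial-time computable.

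It remains to check that $W:=(w_k^{\star})_{k\geq 1}$ consists of pairwise distinct, well-defined rationals and that each $\beta_k$ is nonzero. Setting $\rho:=(q+w)/w\in\mathbb{Q}$, one computes $\beta_k=(w^k/q)(\rho^k-1)$ and $w_k^{\star}=q/(\rho^k-1)$. The hypotheses $w\neq 0$, $q\neq -w$, and $q\neq -2w$ (together with the implicit $q\neq 0$ needed for $Z_q$ to be subset-admissible in the sense of Example~\ref{exa: tutte-reformulated}) force $\rho\in\mathbb{Q}\setminus\{-1,0,1\}$. Since the only rational roots of unity are $\pm 1$, the powers $\rho,\rho^2,\rho^3,\ldots$ are pairwise distinct and none equals $1$; hence every $\beta_k$ is nonzero, every $w_k^{\star}$ is well-defined, and the $w_k^{\star}$ are pairwise distinct. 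Thus $(\mathcal{P},F)$ satisfies Definition~\ref{def: Weight simulation}; the excluded case $q=1$ plays no role in this argument and is inherited from Lemma~\ref{lem: tutte-coeff}. The main obstacle is carrying out the local identity cleanly: one must carefully track the internal path-segments of $B$ (those containing neither $u$ nor $v$), which contribute extra components to $k(G',A\cup B)$, while treating the case $B=E(P_k)$ separately since it ``contracts'' the path and merges $u$'s and $v$'s components into one.
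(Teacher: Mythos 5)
Your derivation for $q\neq 0$ is correct and reproduces exactly the paper's weights $w_k^{\star}=q/((1+q/w)^k-1)$ and factor $F(G)=q^{-|E(G)|}\prod_k((q+w)^k-w^k)^{a_k(G)}$ (your $\prod_k\beta_k^{m_k}$ is the same thing, since $\sum_k m_k=|E(G)|$). The difference in style is that you rederive the series-reduction identity from deletion--contraction on $P_k$, whereas the paper simply cites it from Sokal and Dell et al.; both routes lead to the same place, and your argument that $\rho=1+q/w\notin\{-1,0,1\}$ gives pairwise distinct weights is the right one.

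However, there is a genuine gap: you dismiss $q=0$ as ``implicitly excluded,'' but the lemma \emph{does} allow $q=0$ (the hypotheses are $w\neq 0$ and $q\notin\{1,-w,-2w\}$, none of which forbid $q=0$). The paper handles $q=0$ with a different normalized polynomial
\[
Z_{0}(G;q,w)=\sum_{A\subseteq E(G)}q^{k(G,A)-k(G,E)}w^{|A|},
\]
which \emph{is} subset-admissible at $q=0$ (the sieve function becomes $[k(G,A)=k(G,E)]$), and this is what the notation $Z_q$ means at $q=0$ in this subsection. For that polynomial the stretching identity yields a \emph{different} sequence of effective weights, $w_k=w/k$, and a different factor $F(G)=\prod_{k}(kw^{k-1})^{a_k(G)}$. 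Your formulas degenerate at $q=0$ (your $\beta_k$ and $w_k^{\star}$ become $0/0$), so the $q=0$ case needs its own treatment and cannot be absorbed into the $q\neq 0$ computation. Since $q=0$ corresponds to the line $x=1$ in the Tutte plane and is needed for the final theorem, this case must be included.
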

\begin{proof}
We have to distinguish whether $q=0$ or $q\neq0$ holds, and we obtain
different weights and factor functions in the different cases.

If $q=0$, we define $W=(w_{k})_{k\in\mathbb{N}}$ with the pairwise
distinct weights $w_{k}=\frac{w}{k}$ for integers $k\in\mathbb{N}$. Given
a graph $G$ with edge-weights from $W$, let $a_{k}(G)$ for $k\in\mathbb{N}$
denote the number of edges in $G$ with weight $w_{k}$, and define
\[
F(G)=\prod_{k\in\mathbb{N}}(kw^{k-1})^{a_{k}(G)}.
\]
Then $(\mathcal{P},F)$ allows to reduce $\evalfix W{\pmult Z_{0}}$
to $\evalfix w{Z_{0}}$, see for instance \cite[Corollary~6.7]{DBLP:journals/talg/DellHMTW14}
and \cite{GJ08Tutte}.

If $q\neq0$, then the family of paths realizes different weights
and requires a different factor function. Define $W=(w_{k})_{k\in\mathbb{N}}$
with 
\[
w_{k}=\frac{q}{(1+\frac{q}{w})^{k}-1}
\]
and observe that these weights are pairwise distinct provided that
$1+\frac{q}{w}\notin\{-1,0,1\}$, which holds by $q\neq0$ and the
prerequisites of the proposition. Given a graph $G$ with edge-weights
from $W$, let $a_{k}(G)$ for $k\in\mathbb{N}$ denote the number
of edges in $G$ with weight $w_{k}$ and define 
\[
F(G)=q^{-|E(G)|}\prod_{k\in\mathbb{N}}((q+w)^{k}-w^{k})^{a_{k}(G)}.
\]
It is shown \cite[Lemma~6.2]{DBLP:journals/talg/DellHMTW14} and
\cite[Prop.~2.2 and 2.3]{Sokal2005} that $(\mathcal{P},F)$
allows to reduce $\pmult Z_{q}$ on $W$ to $Z_{q}(w)$.
\end{proof}
By combining Lemma~\ref{lem: tutte-coeff} for the coefficient hardness
and Lemma~\ref{lem: tutte-weights} for weight simulations, we can
then invoke Theorem~\ref{thm: Block interpolation} and obtain:
\begin{lem}
\label{lem: hardness-Q}Let $w\neq0$ and $q\notin\{1,-w,-2w\}$.
Assuming $\sharpETH$, there are constants $\epsilon,C>0$ such that
the problem $\evalfix w{Z_{q}}$ admits no $O(2^{\epsilon n})$ time
algorithm on graphs with $n$ vertices and at most $Cn$ edges.
\end{lem}
From this lemma, we can derive the hardness of the problem $Z(q,w)$ at most points
with $q\notin\{0,1\}$ analogously as in \cite[Proposition~6.4]{DBLP:journals/talg/DellHMTW14}.
\begin{lem}
\label{lem: Q-all-hardness}Let $(q,w)\in\mathbb{Q}^{2}\setminus\{(4,-2),(2,-1),(2,-2)\}$
with $q\notin\{0,1\}$ and $w\neq0$. Assuming $\sharpETH$, there
are constants $\epsilon,C>0$ such that the problem $\evalfix w{Z_{q}}$
admits no $O(2^{\epsilon n})$ time algorithm on graphs with $n$
vertices and at most $Cn$ edges.
\end{lem}
\begin{proof}
By Lemma~\ref{lem: hardness-Q}, the claim must only be shown if 
$q\in\{-w,-2w\}$ holds in addition to the prerequisites of Lemma~\ref{lem: Q-all-hardness}. As in \cite[Proposition~6.4]{DBLP:journals/talg/DellHMTW14},
we then use the operations of thickening and stretching to reduce
the problem $\evalfix{w'}{Z_{q}}$ for some $w'$ with $q\notin\{-w',-2w'\}$
to $\evalfix w{Z_{q}}$. The hardness of $\evalfix w{Z_{q}}$ then
follows from Lemma~\ref{lem: hardness-Q}.

To proceed this way, let $G_{k}$ be the graph obtained from $G$ by replacing
each edge with $k$ parallel edges, followed by subdividing each edge
once. Then there exists a number $w_{k}$, depending on $q$, $w$,
and $k$, such that $Z(G;q,w_{k})$ can be computed in polynomial
time from the value $Z(G_{k};q,w)$, as shown in \cite[Proposition~6.4]{DBLP:journals/talg/DellHMTW14}.
The same reference shows that, if the prerequisites of the lemma are
satisfied, a suitable value $k=k(q,w)$ can be chosen such that $q\notin\{-w_{k},-2w_{k}\}$.
Since $q,w$ are fixed, we have $k=O(1)$, and the graph $G_{k}$
hence has $O(|V(G)|+|E(G)|)$ vertices and edges. This proves the
claim.
\end{proof}
By the substitution (\ref{eq: tutte-relations}) that maps $Z(\cdot,\cdot)$
to the classical parameterization $T(\cdot,\cdot)$ of the Tutte polynomial,
we can rephrase this result in terms of $T$.
\begin{thm*}
[Restatement of Theorem \ref{thm: Main Tutte}] Assuming $\sharpETH$,
there are constants $\epsilon,C>0$ such that the Tutte polynomial
$T(x,y)$ cannot be evaluated in time $O(2^{\epsilon n})$ on graphs
with $n$ vertices and at most $Cn$ edges, provided that $y\neq1$,
and $(x,y)\notin\{(-1,-1),(0,-1),(-1,0)\}$, and $(x-1)(y-1)\neq1$.

\begin{proof}
Using (\ref{eq: tutte-relations}), computing $Z(G;q,w)$ is equivalent
to computing $T(G;x,y)$ with $x=\frac{q}{w}+1$ and $y=w+1$, provided
that $q\neq0$. We use this to rephrase the evaluations $Z(q,w)$
for $(q,w)\in\mathbb{Q}^{2}$ that are not shown to be hard by Lemma~\ref{lem: Q-all-hardness}
in terms of $T(x,y)$.

\begin{enumerate}
\item If $w=0$, then $y=1$.
\item If $(q,w) \in \{(4,-2),(2,-1),(2,-2)\}$, then $(x,y)\in\{(-1,-1),(-1,0),(0,-1)\}$.
\item If $q=1$, then $(x-1)(y-1)=1$.
\end{enumerate}
Hence, Lemma~\ref{lem: Q-all-hardness} shows a tight lower bound
for all points $(x,y)\in\mathbb{Q}^{2}$ relevant for the theorem
that satisfy $q=(x-1)(y-1)\neq0$. We then consider those points with
$q=0$. Since we may assume $y\neq1$, only points $(x,y)$ with $x=1$
and $y\neq1$ are left open. In this case, we invoke Lemma~\ref{lem: hardness-Q}
with $q=0$ and $w=y-1$. Using (\ref{eq: Z to Z0}), we then obtain
$T(G;1,y)=w^{-|V(G)|}\cdot Z_{0}(G;w)$. Since $w\neq0$ and $\evalfix w{Z_{0}}$
admits a tight lower bound under $\sharpETH$ by Lemma~\ref{lem: hardness-Q},
the theorem follows.
\end{proof}
\end{thm*}
If either of the last two conditions of Theorem~\ref{thm: Main Tutte}
does not hold, then the evaluation of the Tutte polynomial
is known to admit a polynomial-time algorithm. The $\sharpP$-hard
points on the line given by $y=1$ are however not covered by Theorem~\ref{thm: Main Tutte},
and they actually do not fit into the block interpolation framework as defined in this paper.
Nevertheless, as discussed earlier, this line was settled recently \cite{DBLP:conf/iwpec/BrandDR16} by extending the block interpolation framework to a setting where gadgets are not required to be placed locally at vertices.

\paragraph*{Acknowledgements:}

The author thanks Holger Dell and the anonymous reviewers for providing
very helpful comments that improved many aspects of this paper.

\bibliographystyle{plain}
\bibliography{references-clean}

\end{document}